\theoremstyle{plain}
\newtheorem{theorem}{Theorem}
\newtheorem{claim}{Claim}[section]
\newtheorem{lemma}{Lemma}[section]
\newtheorem{proposition}{Proposition}
\newtheorem{observation}{Observation}[section]
\theoremstyle{definition}
\newtheorem{definition}{Definition}
\newtheorem{corollary}{Corollary}
\newtheorem{example}{Example}
\newtheorem{remark}{Remark}
\newlength{\bibitemsep}\setlength{\bibitemsep}{.2\baselineskip plus .05\baselineskip minus .05\baselineskip}
\newlength{\bibparskip}\setlength{\bibparskip}{0pt}
\let\oldthebibliography\thebibliography
\renewcommand\thebibliography[1]{%
  \oldthebibliography{#1}%
  \setlength{\parskip}{\bibitemsep}%
  \setlength{\itemsep}{\bibparskip}%
}
\title{Best Cost-Sharing Rule Design for Selfish Bin Packing}
\author{Changjun Wang\footnote{Academy of Mathematics and Systems Science, Chinese Academy of Sciences, Beijing, China; \textsf{wcj@amss.ac.cn}} \and Guochuan Zhang\footnote{College of Computer Science, Zhejiang University, Hangzhou, China; \textsf{zgc@zju.edu.cn}} }
\date {}
\begin{document}

\maketitle
\begin{abstract}
In selfish bin packing, each item is regarded as a selfish player, who aims to minimize the cost-share by choosing a bin it can fit in. To have a least number of bins used, cost-sharing rules play an important role. The currently best known cost sharing rule has a \emph{price of anarchy} ($PoA$) larger than 1.45, while a general lower bound 4/3 on $PoA$ applies to any cost-sharing rule under which no items have the incentive to move unilaterally to an empty bin. In this paper, we propose a novel and simple rule with a $PoA$ matching the lower bound of $4/3$, thus completely resolving this game. The new rule always admits a Nash equilibrium and its \emph{price of stability} ($PoS$) is one. Furthermore, the well-known bin packing algorithm $BFD$ (Best-Fit Decreasing) is shown to achieve a strong equilibrium, implying that a stable packing with an asymptotic approximation ratio of $11/9$ can be produced in polynomial time. As an extension of the designing framework, we further study a variant of the selfish scheduling game, and design a best coordination mechanism achieving $PoS=1$ and $PoA=4/3$ as well.
\end{abstract}
\section{Introduction}
Bin packing, one of the fundamental combinatorial optimization models, is a sort of resource-sharing problems. Given a sufficient number of identical resource slots (bins), assign a set of resource requests (items) to a minimum number of slots such that the total amount of requests in each assigned slot is bounded by the given capacity (usually normalized as one). As one of the very first NP-hard problems, bin packing has been extensively studied, while new results are still coming. 
The most common measure {\em asymptotic approximation ratio} evaluates the worst-case ratio between the cost of an approximation algorithm and the optimal cost in the sense that the optimal cost is arbitrarily large. Along this line, the well-known bin packing algorithm $BFD$ (Best Fit Decreasing), which will be formally introduced in Section \ref{se:BFD}, admits an asymptotic approximation ratio $11/9$~\cite{J74}. Namely,
$$
BFD(L)\le \frac{11}{9} OPT(L)+c
$$
is valid for any bin packing instance $L$, where $c$ is a constant, $BFD(L)$ is the number of bins used by $BFD$, $OPT(L)$ is the optimum number of bins, on $L$. 
An APTAS (asymptotic polynomial time approximation scheme) was provided by Fernandez de la Vega and Lueker \cite{de81}, which admits an asymptotic approximation ratio arbitrarily close to one.

Improving the approximation on bin packing has not yet stopped. There is no evidence for bin packing to rule out a polynomial time algorithm which uses at most one more bin than an optimum packing, assuming $P\not=NP$. There have been a lot of efforts in reducing the number of extra bins used, among which the best known algorithm uses $O(\log OPT(L))$ more bins than an optimal packing, owing to the elegant work by Hoberg and Rothvoss \cite{H17}.

In the last two decades, game settings were introduced to classical combinatorial optimization problems. In selfish bin packing, each bin has a unit cost, which is shared by all items using the same bin. Each item is handled by an agent (or simply regards an item as a selfish player), who aims to pay the least cost-share. Instead of packing algorithms, cost-sharing rules play an important role in this game, based on which the agents play selfishly to choose a bin. A packing solution is a Nash equilibrium if no items have an incentive to move unilaterally to a different bin with enough space. We also say such a packing is {\em stable}. To measure the efficiency of such a bin packing game, we still consider the number of bins used as the social objective. We are interested in how much loss a stable packing may have in comparison with an optimal packing in the classical setting without game issues (called social optimum). To this end we use $PoA$ (Price of Anarchy) \cite{K99,R02} and $PoS$ (Price of Stability) \cite{A08} to evaluate the game, which are defined as the worst case asymptotic ratio between the number of bins used in a worst (for $PoA$) and a best (for $PoS$) Nash equilibrium and the social optimum, respectively. For a selfish bin packing game $G$, let $OPT(G)$ be the social optimum, and $NE(G)$ be the set of Nash equilibria. For a stable packing $\pi\in NE(G)$, denote by $n(\pi)$ the number of bins used in $\pi$. Then PoA and PoS for the bin packing game are formally defined as \cite{E11}

$$PoA=\limsup_{OPT(G)\to\infty}\sup_{G}\max_{\pi\in NE(G)} \frac{n(\pi)}{OPT(G)}$$
and
$$PoS=\limsup_{OPT(G)\to\infty}\sup_{G}\min_{\pi\in NE(G)} \frac{n(\pi)}{OPT(G)}.$$
\subsection{Related Work}
The selfish bin packing game was initiated by Bil\'o \cite{B06}, who analyzed a natural proportional cost-sharing rule, under which an item pays the cost proportional to its size. It was proved that there exists a Nash equilibrium achieving the social optimum, implying that $PoS=1$, while $PoA$ falls in the interval $[1.6,1.67]$. Later two groups were focused on narrowing the interval. Epstein and Kleiman \cite{E11}, and Yu and Zhang \cite{Y08} independently got a lower bound of around 1.6416, while Epstein and Kleiman \cite{E11} reached a better upper bound of 1.6428. The gap is very small but still there. Epstein et al. \cite{E16} further considered the case that item sizes are restricted and a parametric bound on $PoA$ was derived.

Another natural rule, called the equally sharing rule, simply asks the items in the same bin to pay the same cost regardless of their sizes. Ma et al. \cite{M13} showed that under this rule $PoA$ is in $[1.6901,1.7]$, which is even worse than the proportional rule. Very recently, D\'osa and Epstein \cite{D20} reduced an upper bound of $PoA$ down to below 1.7.

Both the two natural rules do not seem promising as the inefficiency ($PoA$) is too big. It is thus well motivated to design better cost-sharing rules to improve the efficiency of Nash equilibria. There have appeared a number of interesting results in recent years. Nong et al. \cite{N18} proposed a rule, where the payoff of an item is a function of its own size and the largest item size in its bin. They obtained a better $PoA$ of 1.5. 
Later, Chen et al. \cite{C21}  designed another cost-sharing mechanism with PoA being between 1.47407 and 1.4748. Zhang and Zhang \cite{Z20} kept the proportional rule for those bins without large items (whose size is larger than 1/2), while offering a discount function for the items staying with a large item. They derived a better cost-sharing rule with $PoA$ at most $22/15\approx 1.467$, the same upper bound for the proportional rule applying to bin packing games without large items \cite{E16}. D\'osa et al. \cite{Dosa2019} dealt with the problem in a different way. They applied the proportional rule to the item weights instead of the item sizes. By carefully defining the item weights, they proved the $PoA$ is within $1.4528$ and $1.4545$ \cite{Dosa2019}.

Recall that the classical bin packing problem admits an APTAS, while the currently best-known upper bound on $PoA$ for selfish bin packing is 1.4545. It leaves a big gap. Does there exist a cost-sharing rule, under which the $PoA$ can be arbitrarily close to one? Unfortunately the answer is negative for a wide class of the bin packing games. D\'osa et al. \cite{Dosa2019} presented an instance showing that the $PoA$ is at least $4/3$ if the cost-sharing rule does not encourage a packed item to move to an empty bin. Such an assumption is quite natural, as long as no items will pay a cost more than one. There are also quite a few results in the literature considering different variants of selfish bin packing, such as selfish bin covering \cite{C11}, selfish vector packing \cite{E21}.

There is also an extensive study in the literature that shares a similar favor as our work: how to  design  prior game rules to minimize the inefficiency induced by selfish players. The line of research on coordination mechanisms was introduced by Christodoulou et al. \cite{CKN09}, and subsequently studied by Immorlica et al. \cite{ILMS09}, Caragiannis \cite{C13}, Kollias \cite{K13}, Azar et al. \cite{AFJMS15}, and Cole et al. \cite{CCGMO15}. Most work on coordination mechanisms concerns scheduling games in which $n$ players assign a respective job to one of $m$ machines with a goal to minimize its own completion time, and seeks how the price of anarchy varies with the choice of local machine scheduling policies (i.e., the order in which to process jobs assigned to the same machine). Another mainstream of such studies is on the design of cost-sharing protocols for resource selection games~\cite{CRV10,vH13,Hv14,GKR16,HHSS21}, where selfish players choose a set of resources to use, and the cost of each resource is a function of the set of players choosing the resource. In these works, the model settings are somewhat different from the bin packing game and their allowable cost-sharing methods are more flexible. Therefore the proved bounds and corresponding approaches therein do not apply to the bin packing game. Moreover, the design of scheduling policy and the design of cost-sharing protocols are believed to be fundamentally different \cite{vH13,Hv14}. As far as we know, no work has shown that these two rule-designs are closely related. Interestingly, however, in this paper, we will show that the best cost-sharing rule in some specific resource selection games and the best scheduling policy in some specific scheduling games are essentially equivalent.

\subsection{Our Contributions} In this paper, we revisit selfish bin packing by proposing a novel and simple cost-sharing scheme, which is based on the local item sizes in a bin. Basically, we introduce a threshold parameter $0<\Lambda\le 1$, and stack up the items in decreasing order of their sizes in a bin. If the total size is at least $\Lambda$, then the items pay the cost based on a cost density function ranging in $[0, \Lambda]$; otherwise, the bottom item (the largest one) has to pay some extra cost. The cost-sharing rules satisfy many nice and natural properties. 
By setting $\Lambda=\frac34$, we show that there always exists an optimum packing which is a Nash equilibrium, implying that $PoS=1$. Then we prove that the $PoA$ is at most $\frac43$, matching the lower bound, and thus resolving the wide class of games. Finally, we turn to the classical $BFD$ algorithm which works perfectly under our cost-sharing rule by setting $\Lambda=\frac23$. $BFD$ always produces a strong Nash equilibrium, implying that we can output in $O(n\log n)$ time a stable packing whose asymptotic approximation ratio is $\frac{11}{9}$ ($<\frac43$). It is also currently the best bound of a Nash equilibrium carried out in polynomial time.

Furthermore, as an extension, we study a variant of the selfish scheduling game, and design a best coordination mechanism with $PoS=1,\ PoA=\frac43$. We show that the best cost-sharing rule in the selfish bin packing is essentially equivalent to the best scheduling policy of this scheduling game.

\medskip

\noindent\textbf{Organization of the Paper.} In the next section, we formally introduce the bin packing game model and the new cost-sharing rules together with several observations. Section \ref{se:analysis} presents the analysis of the rule by showing the main results on $PoS$ and $PoA$. The analysis of algorithm $BFD$ is provided in Section \ref{se:BFD}. In Section \ref{se:extend}, we extend the results to a variant of the selfish scheduling game. Finally, we conclude the paper in Section \ref{se:con}.

\section{Game Model and A New Cost-Sharing Scheme}
In this section, we formally introduce the game model of selfish bin packing and some definitions. Then we propose our new cost-sharing rules that satisfies many nice properties.
\subsection{Definitions}
Let $N=\{1,2,\ldots,n\}$ be a set of items, where each item $i\in N$ has a size $s_i\in(0,1]$ and is controlled by a selfish agent. For simplicity, in the following we will directly view the items  as game players. There are plenty of  unit-capacity bins being available to  pack the items, but each with an unit opening cost. For a set of items $I\subseteq N$ packed into a  bin
$B$, we say that $B$ is valid or a valid bin, if $s(I):=\sum_{i\in I}s_i\le 1$. A packing $\pi$ is defined to be an allocation of the items into some  used bins $B_1,\ldots, B_m$  such that $N = B_1 \cup B_2\cdots \cup B_m$ and each bin $B_k$ is valid, where $B_k$ ($1\le k\le m$) denotes a bin as well as the set of items allocated in that bin (slightly abusing notation). For the items packed into one bin, they need to share the bin's opening cost according to a \emph{pre-given} cost-sharing rule $\Xi$. We use $c_i(\pi)\geq 0$ to represent the cost-share of item $i$ under packing $\pi$. Each item aims to minimize its own cost-share. The strategy $b_i$ of each item $i\in N$ is to select a valid bin to stay in. Then a feasible strategy profile $\mathbf b$ is  equivalent a unique packing $\pi=(B_1,\ldots,B_m)$ with nonempty bins $B_k$, i.e., $\mathbf b$ and $\pi$ can be transformed by  $B_k=\{i\in N: b_i=B_k\}$ for  $k=1,\ldots,m$,  or by  $b_i=B_k$ for every $i\in B_k$. 
The social objective is to minimize the  number of used bins to pack all the items. 

\begin{definition}\label{NE}
A strategy profile $\mathbf b$ (or a packing $\pi$) is called a \emph{Nash equilibrium} (\emph{NE}, for short) if no item can reduce its cost by unilaterally changing its strategy, i.e., $c_i(\mathbf b)\le c_i(b_i', \mathbf{b}_{-i})$,  $\forall i\in N$.
\end{definition}

Designing cost sharing rules to improve the efficiency of equilibrium (i.e., minimize PoA or PoS) for different resource selection games has been widely studied in literature~\cite{CRV10,vH13,Hv14,GKR16,Z20,HHSS21}. In these studies,  many desirable properties have been proposed to restrict the design space of feasible cost sharing rules. Below we define them and some new natural ones in this context.

\begin{definition}[{\sc Properties of Cost Sharing Rules}]\label{axiom}
A cost sharing rule $\Xi$ is:
\begin{enumerate}
\item \emph{stable}, if it induces only games that admit at least one NE.
\item \emph{budget-balanced}, if for every  bin $B$ of any packing $\pi$, $\sum_{i\in B}c_i(\pi)=1$.
\item \emph{local}, if for any two different packings $\pi\neq\pi'$ that both contain a  bin $B$ with the same set of packed items, and every item $i\in B$, it holds $c_i(\pi)=c_i(\pi')$.
\item \emph{monotone},  if for any two bins $B,B'$ satisfying $B\subseteq B'$ (suppose w.l.o.g. $B$ is a bin of  packing $\pi$ and $B'$ is with $\pi'$) and every item $i\in B$, $c_i(\pi)\geq c_i(\pi')$.
\item \emph{fair},  if for any two items $i,j$ that are packed in the same bin of any packing $\pi$, it holds $(s_i-s_j)\cdot\big(c_i(\pi)- c_j(\pi)\big)\geq 0$.
\item \emph{polynomial time computable}, if for every packing $\pi$, the cost-shares $c_i(\pi)$ of all $i\in N$ can be computed in polynomial time of the input size.
\end{enumerate}
\end{definition}

Now we discuss the above properties in more detail. Note that  the NEs in Definition \ref{NE} are in fact  \emph{pure strategy} Nash equilibria. Thus a stable cost sharing rule ensures the existence of a pure strategy Nash equilibrium  in the induced game of selfish bin packing.  Budget-balance is a natural, straightforward requirement for a cost sharing rule in the economics literature. Locality means that for any bin, the cost-shares of the items in this bin depend only on the bin itself (i.e., the set of items in this bin),  and disregard the information of other bins. A local cost sharing rule ensures that the opening costs can be distributed in a decentralized and local manner, that is, the bin does not have to know any cost-shares of other items or the allocation for other bins. Monotonicity stating that moving  new items into a bin should not increase the cost-shares of previous items of this bin  is  a natural property in selfish bin packing as the  cost of a bin is always a constant. Since the capacity of each bin is limited, in a fair cost-sharing rule, for the items packed in the same bin,  an item with a larger size should not pay less than a smaller item. Polynomial-time computability of the rules is crucial to make the games run efficiently.

\subsection{A new cost-sharing scheme}
In this subsection, we will propose our new cost-sharing scheme. First recall that under the classical propositional cost-sharing rule, the items are incentivized to move into feasible bins with more total loads. Differently from that, the rough idea of our new rules is to motivate each item $i\in N$ to choose a feasible bin with more total loads of larger items (than $i$'s size). Even though this motivation seems no more effective (at getting the bins more loaded) than the propositional rule, we will finally show that this kind of strategic behaviors will result in the best possible NE efficiency. However, to realize this kind of incentive and make the nice properties in Definition \ref{axiom} satisfied, the new cost-sharing scheme needs to be well designed. 

\paragraph{Local-Size-Based Cost-sharing Scheme.}
The new framework of cost-sharing rules is parameterized with a constant $\Lambda\in(0,1]$ and  a ``cost density" function $f(x)$. Namely, a constant $\Lambda$ and a function $f(x)$ will uniquely determine one cost-sharing rule. Given a constant $\Lambda$, the function  $f(x)$ with $x\in[0,1]$ is a non-negative function satisfying: i).  $f(x)$ is decreasing in $[0,\Lambda)$ with $\int_0^{\Lambda}f(x)dx=1$; and ii). $f(x)=0$ for $x\in [\Lambda,1]$. A simple selection of $f(x)$ is letting $f(x)=f_o(x)$, where $f_o(x)$ is a linear function,
\[f_o(x):=\begin{cases} \frac{2}{\Lambda}-\frac{2x}{\Lambda^2}, \   \text{ for } 0\le x<\Lambda; \\ 0, \ \ \text{ for } \Lambda\leq x\leq 1.
\end{cases}\]

The local-size-based (LSB) cost-sharing rules  work  as follows:
\begin{itemize}
\item[1).]  Re-index the items in $N$ such that $s_1\ge s_2\ge \ldots\ge s_n$ and fix this index (order). If index $i<j$, we call item $i$ ``bigger'' than item $j$ or $j$ is ``smaller" than $i$ , denoted by item $i\succ$ item $j$. Note that a bigger item may have the same size as the smaller item.

\item[2).] Given a packing $\pi$, for each bin $B$ of $\pi$, suppose w.l.o.g. that $B=\{i_1,i_2,\ldots,i_l\}$ and $i_1\succ i_2\succ\cdots\succ i_l$. Denote by $S_B^h:=\sum_{j=1}^hs_{i_j}$ for $h\leq l$ the total size of the biggest  $h$ items in bin $B$. Note that $S_B^l=s(B)$. Then the cost share of item $i_h$ for $h=1,2,\ldots,l$ is
\[ c_{i_h}(\pi)=\begin{cases} \int_0^{S_B^1}f(x)dx+\big(1-\int_0^{s(B)}f(x)dx\big),  \  \text{  for  } h=1;\\ \int_{S_B^{h-1}}^{S_B^{h}}f(x)dx,  \ \hspace{5mm} \text{  for  } h>1.
\end{cases}
\]
\end{itemize}

\begin{figure}[H]
	\centering
	\includegraphics[width=9cm]{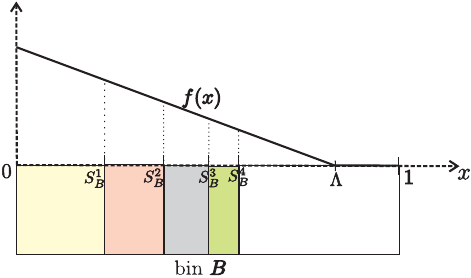}
	\caption{Cost sharing illustration}
		\label{CostSharing}
\end{figure}

In a bin, we call the biggest item (the item with smallest index) the bottom item of the bin.
In the above rules, except for the bottom item, an item's cost depends on the total size of bigger items and its own size, disregarding the sizes of smaller items. For the bottom item, if the total load of the bin $s(B)$ is less than $\Lambda$, then it should also pay for the extra remaining  cost $\int_{s(B)}^{\Lambda}f(x)dx$. Apparently, every LSB cost-sharing rule is budget-balanced and local.

\begin{observation}\label{axiom23}
The LSB cost-sharing rules are budget-balanced and local.
\end{observation}
In addition,  we have the following obvious observation about the LSB cost sharing scheme.
\begin{observation}\label{axiom45}
The LSB cost-sharing rules are  fair and monotone.
\end{observation}
\begin{proof}
By noting that the density function $f(x)$ is monotone non-increasing,  it is easy to see that the LSB cost-sharing rules are  fair. To see the monotonicity, pick  any two bins $B,B'$ satisfying $B\subsetneq B'$.  Suppose w.l.o.g. that $B$ (resp.\ $B'$) is a bin of  packing $\pi$ (resp.\ $\pi'$), and $j\in B'\setminus B$. Then for any item $i\in B$, if $i$ is bigger than $j$ but  not the bottom item of $B$,  then the LSB cost-sharing scheme guarantees that $i$'s cost share remains the same, $c_i(\pi)=c_i(\pi')$; otherwise, $i$ is either a bottom item of $B$ or smaller than $j$, then we have $c_i(\pi)\geq c_i(\pi')$.
\end{proof}

Generally,  the LSB rules may  not be  stable or polynomial time computable. But through selecting some proper $\Lambda$ and  $f(x)$, we can also guarantee that the corresponding LSB cost-sharing rule is  stable and polynomial time computable.
\begin{theorem}\label{good}
Let constant $\Lambda\leq3/4$ and $f(x)=f_o(x)$.  Then this LSB cost-sharing rule satisfies all the proposed properties in Definition \ref{axiom}.
\end{theorem}
\begin{proof}
Having observations \ref{axiom23} and \ref{axiom45}, we  only need to show that this cost-sharing rule is stable and polynomial time computable. The stability of the cost-sharing rule will be proved in Subsection \ref{existence}. Since $f_o(x)$ is a linear function and its integral can be given explicitly,  then  for any packing $\pi$, the time of computing all the cost shares is apparently polynomial in the instance's input.
\end{proof}

In the remaining part of the paper, when we mention selfish bin packing, if not otherwise specified, it is always assumed that the games are induced by the LSB cost-sharing scheme. As we mainly care about the NE packings and their performances in selfish bin packing,  below we will see that the density function $f(x)$ is not critical in the design of our LSB cost-sharing rule.
\begin{proposition}\label{sameNE}
Bin packing games with the same $\Lambda$ but different cost density functions have exactly the same set of NE packings.
\end{proposition}
\begin{proof}
Note that in these games,  a packing is an NE  if and only if no item of this packing can reduce its  cost by unilaterally moving into another feasible bin.  Pick any item $i\in N$ in any packing $\pi$.  With the LSB cost-sharing rule, if $i$ has incentive to move, then it must be the case that, either in the bin $B$ where $i$ stays, the total size of items that are bigger than $i$ is less than $\Lambda$ (otherwise $i$'s cost would be $0$) and there exists another feasible bin (for $i$ to move in) with larger total size of bigger (than $i$) items, or $i$ is a bottom item of its bin $B$ with $s(B)<\Lambda$ and there exists another feasible bin (for $i$) with total load larger than $s(B)-s_i$. 
Thus we can see, once the constant $\Lambda$ is given,  no matter what the exact density function $f(x)$ is, the NE packings are already uniquely determined.
\end{proof}

With Proposition \ref{sameNE} at hand,  if not otherwise specified in the following, we implicitly have $f(x)=f_o(x)$ and will only refer to bin packing games with certain constant $\Lambda$. As for games with different  $\Lambda$s, we show that their NE sets are also highly correlated.
\begin{proposition}\label{subsetNE}
For two different constants $\Lambda_1>\Lambda_2$, the NE set of bin packing game with $\Lambda_1$ is a subset of the NE set of game with $\Lambda_2$.
\end{proposition}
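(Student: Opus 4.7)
The plan is to fix a packing $\pi$ that is NE under $\Lambda_1$ and verify, for every item $i$ in its bin $B$ and every other bin $B'$ with $s_i + s(B') \leq 1$, that the $\Lambda_2$-cost of $i$ in $B$ does not exceed its $\Lambda_2$-cost after moving to $B'$. By Observation \ref{sameNE}, the NE set depends only on $\Lambda$, so I may pin down a specific density, e.g., $f_\Lambda(x) = (2/\Lambda)(1-x/\Lambda)$ on $[0,\Lambda]$; write $F_\Lambda(t)=\int_0^t f_\Lambda$, which is strictly increasing and strictly concave on $[0,\Lambda]$ and identically $1$ on $[\Lambda,1]$.

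I will split the analysis into four cases according to whether $i$ is the bottom-item of $B$ and whether it would become the bottom-item of $B'$. First, when $i$ is non-bottom in $B$ and would be bottom in $B'$, the subadditivity $F_\Lambda(a+b)\leq F_\Lambda(a)+F_\Lambda(b)$ (from concavity with $F_\Lambda(0)=0$) combined with the non-negative empty-slack term in the bottom formula makes this deviation unprofitable for every $\Lambda$. Second, when $i$ is non-bottom in both $B$ and $B'$, strict concavity of $F_\Lambda$ reduces stability to ``$U_i(B)\geq\Lambda$ or $U_i(B')\leq U_i(B)$''; the second disjunct is $\Lambda$-independent, and if only the first holds under $\Lambda_1$, then $U_i(B)\geq\Lambda_1>\Lambda_2$ secures it under $\Lambda_2$. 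Third, when $i$ is bottom in $B$ and non-bottom in $B'$, strict concavity shows that stability collapses to ``no such $B'$ has $U_i(B')>0$'', a condition free of $\Lambda$, so it transfers directly.

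The remaining case is $i$ bottom in both $B$ and $B'$. The stability inequality becomes $\min(s_i+s(B'),\Lambda)\leq\min(s(B),\Lambda)$ and splits into three sub-regimes. If $s(B)\geq\Lambda_1$, then under $\Lambda_2$ the right-hand side equals $\Lambda_2$ and bounds the left-hand side, so the condition is automatic. If $\Lambda_2\leq s(B)<\Lambda_1$, the right-hand side under $\Lambda_2$ is again $\Lambda_2$, making the condition vacuous. If $s(B)<\Lambda_2$, the inequality reduces to $s(B')\leq s(B)-s_i$ under both thresholds, so the constraint enforced by $\Lambda_1$-NE hands over to $\Lambda_2$-NE. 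The main obstacle will be the bookkeeping in these four cases together with verifying the strict-concavity reductions at the boundary where a superior-item total just reaches $\Lambda$; once those reductions are in hand, the monotonicity in $\Lambda$ of each reduced condition gives the inclusion immediately.
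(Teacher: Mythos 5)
Your proof is correct and rests on the same core observation as the paper's: the profitability of any given unilateral move depends on $\Lambda$ only through conditions of the form ``some quantity is less than $\Lambda$,'' which are monotone in $\Lambda$, so a deviation profitable under $\Lambda_2$ would already be profitable under $\Lambda_1>\Lambda_2$. The paper argues this in contrapositive form in a few lines using its earlier characterization of profitable deviations, whereas you verify the same monotonicity directly via an explicit four-case (bottom/non-bottom) analysis; the underlying argument is the same.
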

\begin{proof}
Suppose packing $\pi$ is an NE of the bin packing game with $\Lambda_1$. In the game with $\Lambda_2$, if $\pi$ is not an NE, then there must exist an item $i$ in some bin $B$ such that it can reduce its cost by unilaterally moving into another feasible bin $B'$. This means that either the total size of  items bigger than $i$ in $B$ is less than $\Lambda_2$ and $B'$ has a larger total size of bigger items, or $i$ is the bottom item of $B$ with $s(B)<\Lambda_2$ and $s(B')>s(B)-s_i$. Recall that $\Lambda_1>\Lambda_2$. We can deduce that in the bin packing game with $\Lambda_1$, under $\pi$, item $i$ can also reduce its cost by unilaterally moving to $B'$, contradicting the definition of $\pi$.
\end{proof}

\section{A Best LSB Cost-Sharing Rule}\label{34}\label{se:analysis}
In this section, we will focus on the bin packing games with constant $\Lambda=\frac34$ in our LSB cost-sharing rules, and show the LSB rule with $\Lambda=\frac34$ is the best possible in reducing  equilibria's inefficiency.

\subsection{Existence of Equilibria and $PoS=1$}\label{existence}

In the classical bin packing problems, an optimal solution is a packing with the minimum number of bins used. In the following, we will show that there always exists one optimal packing that being the NE of the game, which also implies its $PoS=1$.
\begin{theorem}\label{pos}
The bin packing game always has an NE that uses the minimum number of bins among all packings.
\end{theorem}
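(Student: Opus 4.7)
My plan is to start from an arbitrary optimal packing $\pi^*$, which uses the minimum number of bins, and iteratively apply unilateral improving moves. Two points need to be established: that no improving move ever opens a new nonempty bin, and that the dynamics terminate.

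The first is immediate from the rule with $\Lambda=3/4$. An item $i$ alone in an otherwise empty bin pays $\int_0^{s_i} f(x)\,dx + \bigl(1-\int_0^{s_i} f(x)\,dx\bigr)=1$, which is the maximum possible cost any item can have. Hence moving into an empty bin is never strictly improving, so the number of nonempty bins along the dynamics is non-increasing. Starting from $\pi^*$, the count stays equal to the optimum throughout, so whatever packing the dynamics terminates at uses the minimum number of bins.

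For termination I would introduce a potential function that strictly progresses on every improving move. For each item $i$, define an effective position $E_\pi(i)$ mirroring the monotonicity of $i$'s cost: set $E_\pi(i) = T_\pi(i)$, the total size of items in $B_i$ superior to $i$, when $i$ is not the bottom of its bin, and $E_\pi(i) = s(B_i)-\Lambda$ (nonpositive) when $i$ is the bottom. A short case analysis over the four move types (bottom or non-bottom at source, bottom or non-bottom at destination) verifies that every strictly improving move by $i$ strictly increases $E_\pi(i)$; in particular, a non-bottom $i$ can never strictly improve by becoming a bottom elsewhere, so that case is vacuous. I would then build a lexicographic potential from the vector $(E_\pi(i))_{i=1}^{n}$ indexed by decreasing item size; by Observation~\ref{samecost}, the entries $E_\pi(j)$ for non-bottom $j$ superior to $i$ are invariant under $i$'s move, so the first coordinate that could move when $i$ deviates is at position $i$ itself, which strictly increases.

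The technical crux, which I expect to be the hardest part, is the bottom-item coupling that Observation~\ref{samecost} does not cover: when $i$ leaves bin $B$ and the bottom-item $b_B$ of $B$ is superior to $i$, the drop in $s(B)$ can raise $b_B$'s cost and lower its $E$-value, disrupting the naive lex-potential. I would address this by augmenting the potential with a primary lex-layer that tracks bottom-bin load deficits, such as the sorted multiset $\{\max(0,\Lambda-s(B)):B\in\pi\}$; a strictly improving move that migrates an item into a denser bin reduces this layer, providing strict progress even in the cases where the $E$-layer alone is ambiguous. Since the set of packings of $n$ items into at most the optimum number of bins is finite, strict progress in the combined lex-potential forces the dynamics to halt at a Nash equilibrium, which by the first part uses exactly the minimum number of bins, establishing $PoS=1$.
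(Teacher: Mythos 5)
Your first step is fine: under this rule every item's cost is at most $1$, and an item alone in a bin pays exactly $1$, so no strictly improving move opens a new bin and any better-response dynamics started at an optimal packing stays optimal. The entire burden therefore falls on termination, and that is where the proposal has a genuine gap. Two problems. First, a local one: your $E_\pi(i)=s(B_i)-\Lambda$ for a bottom item is not nonpositive when $s(B_i)>\Lambda$, and a bottom item of a bin with load above $\Lambda$ can still strictly improve by becoming a non-bottom item elsewhere with a \emph{smaller} superior-total than $s(B_i)-\Lambda$; this is repairable by capping $E$ at $0$ for bottom items. Second, and fatally, the ``bottom-item coupling'' you flag as the crux is not fixed by the deficit multiset $\{\max(0,\Lambda-s(B))\}$, because that multiset does not progress monotonically (in any sorting order, nor in sum) under strictly improving moves. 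Concretely, take $\Lambda=3/4$, a bin $B=\{0.4,0.4\}$ and a bin $B'=\{0.45,0.05\}$. The non-bottom $0.4$ strictly improves by moving to $B'$ (its cost drops from $\int_{0.4}^{0.75}f$ to $\int_{0.45}^{0.75}f$), yet the deficit pair goes from $\{0,0.25\}$ to $\{0.35,0\}$ --- lexicographically larger under both the increasing and the decreasing sort, and larger in total. Simultaneously the bottom item of $B$, which is superior to the mover, sees its deficiency payment jump from $0$ to $\int_{0.4}^{0.75}f$, so its $E$-value strictly drops at an earlier lex coordinate. Thus \emph{both} layers of your combined potential strictly worsen on a single improving move, and the termination argument collapses; you would need a genuinely different potential (or a proof that one exists), which is not supplied.

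For contrast, the paper does not argue via dynamics at all. It directly constructs an equilibrium among optimal packings: first select the optimal packings whose multiset of bottom items is lexicographically largest, then among those the one whose non-bottom-item cost vector is lexicographically smallest ($\pi^*$); Observation~\ref{samecost} shows no non-bottom item can deviate in $\pi^*$. If some bottom item still wants to move, they prove the last bin must be a singleton and then re-optimize by minimizing the last bin's load to obtain a packing $\hat\pi^*$ that is an NE. The fact that even this carefully extremal $\pi^*$ can fail to be an NE and needs a second-stage patch is a warning sign that no single natural lexicographic potential decreases along all improving moves, which is consistent with the counterexample above.
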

To prove Theorem \ref{pos}, we will seek optimal packings with some nice properties, among which we could figure out an NE. 
Given a packing $\pi$, for every item $i$ and the bin $B$ containing $i$, we refer to the total size of bigger items (than $i$) of $B$ as item $i$'s \emph{height level}, denoted by $l_i(\pi)$. Apparently in this game, every item would like to stay in a bin with its own height level as high as possible.  It would be relatively easy to ensure the existence of an optimal packing under which no item can unilaterally improve its height level. However, such packing is not necessarily an NE as some bottom item might still want to deviate when it stays in a bin with a total size less than $\Lambda$. This is because the bottom items do not only care about their height levels but also the total size of their bins. Thus the key challenge of proving Theorem \ref{pos} is how to guarantee the stability of  bottom items in the corresponding optimal packings.

Now let $m^*$ be the minimum number of used bins that can pack all the items of $N$, and $\mathcal {P}^*$ be the set of the optimal packings, each of which uses $m^*$ bins. For each packing $\pi\in\mathcal P^*$, we use $\{b^{\pi}_1,b^{\pi}_2\ldots, b^{\pi}_{m^*}\}$ with $b^{\pi}_1\succ\cdots\succ b^{\pi}_{m^*}$ to denote the set of the bottom items in the $m^*$ bins of $\pi$. For two packings $\pi,\pi'\in \mathcal P^*$, we say that $\pi$ lexicographically
dominates $\pi'$ with respect to the bottom items if either $b^{\pi}_1\succ b^{\pi'}_1$, or for some $k>1$ such that $b^{\pi}_j=b^{\pi'}_j$ for every $j<k$, and $b^{\pi}_k\succ b^{\pi'}_k$.  Broadly speaking, packing $\pi$ dominates $\pi'$ if its bottom items are lexicographically ``bigger" than $\pi'$'s bottom items. Let $\mathcal P_b^*\subseteq \mathcal P^*$ be the set of packings that can not be dominated by other packings in $\mathcal P^*$. Then apparently, $\mathcal P_b^*\neq\emptyset$ and all packings in $\mathcal P_b^*$ have the same set of bottom items, which is denoted by $\{b^*_1,b^*_2,\ldots, b^*_{m^*}\}$.

Then assume that the items in $N\setminus \{b^*_1,b^*_2,\ldots, b^*_{m^*}\}$ are $a_1\succ a_2\succ \cdots\succ a_{n-m^*}$.  For two packings $\pi,\pi'\in \mathcal P^*_b$,  we say that $\pi$ is lexicographically higher than $\pi'$ with respect to $\{a_1,\ldots,a_{n-m^*}\}$ if for some $t\le n-m^*$ the height levels satisfy: $l_{a_h}(\pi)=l_{a_h}(\pi')$ for every $h<t$ and $l_{a_{t}}(\pi)>l_{a_{t}}(\pi')$. In other words, for two packings $\pi,\pi'\in \mathcal P^*_b$, since they have the same set of bottom items, then $\pi$ is  higher than $\pi'$ if the height levels of $\{a_1,a_2,\ldots,a_{n-m^*}\}$ under $\pi$ are lexicographically higher than their height levels under $\pi'$.
Since $\mathcal P_b^*\neq\emptyset$, then there must exist a packing $\pi^*\in \mathcal P_b^*$ such that no other packing in $\mathcal P_b^*$ is lexicographically higher than it.  Let $B_1,\ldots,B_{m^*}$ be the bins that contain the bottom items $b^*_1,\ldots, b^*_{m^*}$ respectively. To ease the following discussions, we additionally require $\pi^*$ to satisfy the following condition: if there is $h<m^*$ such that $s_{b_h^*}=s_{b_{m*}^*}$, then the second biggest item of $B_{m^*}$ is not ``bigger" than the second biggest item of bin $B_h$ (including the cases of the second item being empty). This additional requirement can be easily realized by directly swapping all the non-bottom items of the corresponding two bins from a given $\pi^*$. 

Note that under any packing $\pi\in\mathcal P_b^*$, for every non-bottom item $i\in\{a_1,\ldots,a_{n-m^*}\}$, its cost share is $\int_{l_i(\pi)}^{l_i(\pi)+s_i}f(x)dx$. Recall that $f(x)$ is a non-increasing function. Thus under the selected packing $\pi^*$, by its definition, no item of $\{a_1, a_2, \ldots, a_{n-m^*}\}$ has the incentive to unilaterally deviate from its current bin since the single deviation cannot increase this item's height level (thus decrease its cost).
 If $\pi^*$ is an NE packing, then we are done. Otherwise, there must exist one unsatisfied bottom item that wants to move into another feasible bin to decrease its cost.  
 We first have the following claims.
\begin{claim}
If $\pi^*$ is not an NE packing, then the smallest bottom item $s_{b_{m^*}^*}\le 1/2$.
\end{claim}
\begin{proof}
Since  $\pi^*$ is not an NE, then at least one bottom item of some bin can unilaterally move into another bin. This means that a bin can accommodate some two bottom-items of $\{b^*_1,b^*_2,\ldots, b^*_{m^*}\}$, which implies  $s_{b_{m^*}^*}\le 1/2$.
\end{proof}
\begin{claim}\label{uniquelast}
If $\pi^*$ is not an NE packing, then the last bin $B_{m^*}$ contains only one item (i.e., $b^*_{m^*}$).
\end{claim}
\begin{proof}

Now suppose on the contrary that there are at least two items in $B_{m^*}$. Let $x$ be the second biggest item in $B_{m^*}$. We also use $x$ to represent its size if there is no confusion. It is easy to check that item $x$ can not fit into any other bin. Because if $x$ can fit into another bin (say $B_j$), recalling $b_j^* \succ b_{m^*}^*$ and the additional requirement of $\pi^*$,  it would imply that item $x$ would achieve a higher height level (than $l_{x}(\pi^*)=s_{b_{m^*}^*}$) after moving into $B_j$ and other bigger items' height levels could not be affected, contradicting the selection of $\pi^*$.

Since $\pi^*$ is not an NE, let $b_k^*$ be the bottom item that is willing to move from its current bin $B_k$ to another bin $B_h$, where $s_{b_k^*}+s(B_h)\leq 1$. Now we claim it must be the case that bin $B_h$ is exactly $B_{m^*}$. Because otherwise from the definition of item $x$ and the fact that $b_{m^*}^*$ is the smallest bottom item, we know that item $x$ can also fit into $B_h$, thereby leading to a contradiction to the above $x$'s fitness. Therefore,
 the  unsatisfied bottom item $b_k^*$ can only move from $B_k$ to $B_{m^*}$.

By the above definition of $b_k^*$ and the given LSB rule, we can deduce that $s(B_k)<3/4$, and $s(B_k)-s_{b_k^*}<s(B_{m^*})\le 1-s_{b_k^*}$. Since $x$ cannot fit into $B_k$, while $b_k^*$ ($\succ x$) fits into $B_{m^*}$, we have $x>1-s(B_k)>1/4$ and $s(B_{m^*})<s(B_k)<3/4$. Then we can deduce that $B_k$ contains at most two items that are bigger than item $x$.

Let $y$ be the second biggest item of $B_k$  (by slightly abusing notation, its size is also denoted by $y$). Then we have  $b_{m^*}^*\succ y$, because otherwise, by moving $b_k^*$ from $B_k$ to $B_{m^*}$, we would obtain another optimal packing that dominates all the packings in $\mathcal P_b^*$ w.r.t. the bottom items, contradicting the definition of $\mathcal P_b^*$. Define $I_k':=B_k\setminus\{b_k^*,y\}$, $I_{m^*}':=B_{m^*}\setminus \{b_{m^*}^*\}$.
Since $b_k^*$
can unilaterally move into $B_{m^*}$, thus  $$s_{b_k^*}+y+s(I_{m^*}')\leq s_{b_k^*}+s_{b_{m^*}^*}+s(I_{m^*}')=s_{b_k^*}+s(B_{m^*})\leq 1.$$ Meanwhile, $$s_{b_{m^*}^*}+s(I_k')<s(B_k)<3/4.$$
Therefore, by swapping $I_k'$ and $I_{m^*}'$, we will get a new optimal packing $\pi^{\circ}\in \mathcal P_b^*$ from $\pi^*$, with the two bins $B_k, B_{m^*}$ of $\pi^*$ replaced by another two  feasible bins $B_k'=\{b_k^*,y\} \cup I_{m^*}', B_{m^*}'=\{b_{m^*}^*\}\cup I_k'$.

Recall that item $x$ is bigger than any item in $I_k'$.  By noting $x\in B_k'$ in the new packing $\pi^{\circ}$, and combining with $b_k^*\succ b_{m^*}^*$ and the additional requirement of $\pi^*$, it is easy to see that packing $\pi^{\circ}$ is lexicographically higher than $\pi^*$, contradicting the definition of $\pi^*$. Now we can conclude that if $\pi^*$ is not an NE, then $b^*_m$ is the unique item in bin $B_{m^*}$.
\end{proof}

It's possible that $\pi^*$ is not an NE. So in the set $\mathcal P_b^*$,  we cannot guarantee the existence of an NE packing.  Fortunately, in that case, we would have a nice estimation of the last bin's size. Now we expand the set $\mathcal P_b^*$ to the set  $\mathcal  P_{b^-}^*\subseteq \mathcal P^*$, which consist of all the optimal packings that only have  $b^*_1,b^*_2,\ldots, b^*_{m^*-1}$ as the fixed bottom items (i.e, relax the possibility of the last bin's bottom item). Apparently, $\mathcal P_b^* \subseteq  \mathcal  P_{b^-}^*$. We will show that $\mathcal  P_{b^-}^*$ must contain an NE packing. Let $\hat\pi^*:=\hat B_1\cup \hat B_2\cup\cdots\cup\hat  B_{m^*}\in \mathcal  P_{b^-}^*$ be the packing that has the minimum total size of the last bin $\hat B_{m^*}$, where $\hat B_k$ has the item $b^*_k$ as the bottom item for $k=1,2,\ldots,m^*-1$. Besides, we refine $\hat\pi^*$ such that all the non-bottom items of the first $m^*-1$ bins (i.e., the items in $\hat B_1\cup \hat B_2\cup\cdots\cup\hat  B_{m^*-1}\setminus\{b^*_1,b^*_2,\ldots, b^*_{m^*-1}\}$) are in the lexicographically highest levels. That is, under $\hat\pi^*$, these non-bottom items have no incentive to unilaterally migrate to another bin of these $m^*-1$ bins.  Now we are ready to complete the proof of Theorem \ref{pos}.



\bigskip

\noindent\textbf{Proof of Theorem \ref{pos}}
\begin{proof}
We claim that either $\pi^*$ or $\hat\pi^*$ is an NE packing. Now suppose $\pi^*$ is not an NE.  Then from the definition of $\hat\pi^*$ and Claim \ref{uniquelast} we know, $s(\hat  B_{m^*})< s_{b_{m^*}^*}\le s_{b_{m^*-1}^*}\le \cdots\le s_{b_1^*}$. In this case, for all the non-bottom items in $\hat B_1\cup \hat B_2\cup\cdots\cup\hat B_{m^*-1}$, no one would like to unilaterally move into the last bin $\hat B_{m^*}$. This is because such migration would lower the item's height level from at least $s_{b_{m^*-1}^*}$ to $s(\hat  B_{m^*})$, thus increasing its cost share. Combining with the refinement of $\hat\pi^*$, we know that $\hat\pi^*$ is a stable packing for all the non-bottom items in $\hat B_1\cup \hat B_2\cup\cdots\cup\hat  B_{m^*-1}$. For any item in the last bin $\hat B_{m^*}$, no one can unilaterally move into the first $m^*-1$ bins since otherwise, the resulting new packing will have a smaller total size of the last bin than $\hat\pi^*$. This also implies that among the first $m^*-1$ bins, no bottom item of one bin can unilaterally fit into another bin. Besides, for any bottom item  $b_k^*$ with  $k\in\{1,2,\ldots, m^*-1\}$, if it can fit into the last bin $\hat B_{m^*}$, then we must have $s(\hat B_k)-s_{b_k^*}\geq s(\hat B_{m^*})$ because otherwise, moving  $b_k^*$ into $\hat B_{m^*}$ would result in a new packing with an even smaller size of the ``last bin".  So  under  $\hat\pi^*$, every bottom item of $\{b^*_1,b^*_2,\ldots, b^*_{m^*-1}\}$ is stable. Therefore, from the above analysis we can conclude that if $\pi^*$ is not an NE, then $\hat\pi^*$ must be an NE packing.
\end{proof}

Combining Proposition \ref{subsetNE} and Theorem \ref{pos}, we can also reach the following corollary for games with $\Lambda\le \frac34$.
\begin{corollary}
For every bin packing game  with $\Lambda\le \frac34$, its cost-sharing rule is stable and $PoS=1$.
\end{corollary}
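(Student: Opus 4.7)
The plan is to chain Theorem \ref{pos} with the monotonicity statement of Proposition \ref{subsetNE}; I do not expect to introduce any new combinatorics, so the proposal is essentially bookkeeping. The trivial lower bound $PoS \ge 1$ holds in both cases, since every NE is a valid packing and therefore uses at least $OPT(G)$ bins, so the whole game reduces to exhibiting an NE that attains the social optimum.

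First I would dispose of the boundary case $\Lambda = \tfrac{3}{4}$, which is precisely the content of Theorem \ref{pos}: for every instance $G$ there is an NE using $m^{*} = OPT(G)$ bins, so the inner quantity $\min_{\pi \in NE(G)} n(\pi)/OPT(G)$ equals $1$ and the $\limsup$ defining $PoS$ equals $1$.

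For a general $\Lambda < \tfrac{3}{4}$ I would fix an arbitrary instance $G$ and, on the same items, denote by $G_{\Lambda}$ and $G_{3/4}$ the bin packing games with thresholds $\Lambda$ and $\tfrac{3}{4}$ respectively. Theorem \ref{pos} supplies a packing $\pi^{\diamond} \in NE(G_{3/4})$ with $n(\pi^{\diamond}) = OPT(G)$. Applying Proposition \ref{subsetNE} with $\Lambda_{1} = \tfrac{3}{4}$ and $\Lambda_{2} = \Lambda$ gives the inclusion $NE(G_{3/4}) \subseteq NE(G_{\Lambda})$, so in particular $\pi^{\diamond} \in NE(G_{\Lambda})$. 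Thus $G_{\Lambda}$ also admits an NE attaining the social optimum, and the same $\limsup$ argument as in the boundary case yields $PoS = 1$.

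The only mild subtlety I anticipate is that Proposition \ref{subsetNE} is stated with strict inequality $\Lambda_{1} > \Lambda_{2}$, which is exactly the reason for splitting off $\Lambda = \tfrac{3}{4}$. Beyond this case split there is no genuine obstacle, and the write-up should amount to two or three sentences.
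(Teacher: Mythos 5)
Your proposal is correct and follows exactly the paper's intended argument: the paper derives the corollary by combining Theorem \ref{pos} (an optimal NE exists for $\Lambda=\tfrac34$) with Proposition \ref{subsetNE} (the NE set for a larger threshold is contained in that for a smaller one). Your explicit handling of the boundary case $\Lambda=\tfrac34$ and the lower bound $PoS\ge 1$ is just careful bookkeeping of the same route.
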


\begin{remark}
From Proposition \ref{subsetNE} we know, increasing the constant $\Lambda$ in the LSB cost-sharing rule will make the corresponding NE set shrink, thus is helpful to reduce the game's $PoA$. It seems that we should set $\Lambda=1$ instead of $\Lambda=\frac34$.  However, for the bin packing games with $\Lambda=1$, we do not know whether an NE exists. In fact, for the bin packing games with $\Lambda>\frac34$,  the existence of NEs and their exact $PoS$ are still unknown and open. Fortunately, in the next subsection, we will show that the LSB cost-sharing rule with $\Lambda=\frac34$ is already the optimal rule to attain the best possible $PoA=4/3$.
\end{remark}

\subsection{Best possible $PoA$}\label{poa}

In this section, we prove that the $PoA$ of the bin packing game with $\Lambda=\frac34$ is at most $4/3$. From \cite{Dosa2019} we know, this bound is the best possible. 

Given an NE packing $\pi$ of items in $N$. Suppose without loss of generality that $\pi$ is composed of $m$ nonempty bins $B_1, B_2,\ldots, B_m$ with $s(B_1)\ge s(B_2)\ge \cdots\ge s(B_m)$. For each bin $B_k$, we still use $b_k$ to denote its bottom item, meanwhile, let $t_k$ be the smallest item in it.
\begin{lemma}\label{surplus}
For each bin $B_k$, if its smallest item pays (positive) cost, then the front bins are not feasible for any item of $B_k$ to move in, i.e., if $s(B_k)-s_{t_k}<\frac34$, then  $\forall j<k$, we have $s(B_j)+s_{t_k}>1$.
\end{lemma}
\begin{proof}
Suppose on the contrary that there exists  $j<k$ such that  $s(B_j)+s_{t_k}\leq 1$, which means that $t_k$ can fit into $B_j$.  Let $B_j'\subseteq B_j$ be the subset of items of $B_j$ that are bigger than $t_k$, i.e., $B_j':=\{i\in B_j: i\succ t_k\}$. Since $s(B_k)-s_{t_k}<\frac34$, by recalling the LSB cost-sharing rule and the fact that $\pi$ is an NE, then it must hold that $s(B_j')\le s(B_k)-s_{t_k}<\frac34$, because otherwise $t_k$ would be able to reduce its cost by unilaterally moving into $B_j$. At this case, we have $B_j\setminus B_j'\neq\emptyset$ since $s(B_j')\le s(B_k)-s_{t_k}<s(B_k)\le s(B_j)$. Let $i_j\in B_j\setminus B_j'$ be the biggest item 
of $B_j\setminus B_j'$. Then  $i_j$'s cost share under $\pi$ is $\int_{s(B_j')}^{s(B_j')+s_{i_j}}f(x)dx$.  Note that all items in $B_k$ are bigger than item $i_j$ since $t_k\succ i_j$ and  $s(B_k)+s_{i_j}\le s(B_j)+s_{t_k}\leq 1$. This means that under packing $\pi$, item $i_j$ can also unilaterally move into bin $B_k$ with a new cost share  $\int_{s(B_k)}^{s(B_k)+s_{i_j}}f(x)dx$, which is smaller than $\int_{s(B_j')}^{s(B_j')+s_{i_j}}f(x)dx$ since $s(B_j')\le s(B_k)-s_{t_k}<\frac34$. This  contradicts  the fact that $\pi$ is an NE packing.
\end{proof}

It is clear that $s(B_{m-1})+s(B_{m})>1$  either by $s(B_m)\ge 3/4$ or by Lemma \ref{surplus} with $s(B_m)-s_{t_m}< 3/4$. In the following, we will prove that the optimal packing uses at least $\lceil\frac{3m-3}{4}\rceil\geq\lfloor \frac{3m}{4}\rfloor$ bins, which then implies  $PoA\leq 4/3$.

\begin{theorem}\label{th:poa}
An optimal packing uses at least $\lceil\frac{3m-3}{4}\rceil$ bins, i.e., $PoA\leq \frac43$.
\end{theorem}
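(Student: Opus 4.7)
The plan is to combine a volume-based lower bound on OPT with a count-based lower bound (the number of items of size $>1/2$), leveraging Lemma \ref{surplus} throughout.

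First, I would strengthen Lemma \ref{surplus} into a ``pair-sum'' property: $s(B_i)+s(B_j)>1$ for all $1\le i<j\le m$. When $B_j$ has size $\ge 3/4$, both bins do by the sorted order, so the sum is at least $3/2$; otherwise, applying Lemma \ref{surplus} to $B_j$ gives $s(B_i)+s_{t_j}>1$, and since $s(B_j)\ge s_{t_j}$ the bin-pair sum also exceeds $1$. Letting $q$ denote the number of bins with $s(B_k)\ge 3/4$, I would additionally observe that for $k\ge q+2$, taking the index $k-1$ (also a small bin) in the lemma gives $s_{t_k}>1-s(B_{k-1})>1/4$; thus every item in such a bin exceeds $1/4$, so these bins have at most two items each.

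Next, I would classify the small bins $B_{q+1},\ldots,B_m$ into three types: (II) two-item bins (both items in $(1/4,1/2]$, volume $>1/2$); (III) one-item bins whose single item exceeds $1/2$ (volume $>1/2$, and each such bin contributes one ``large'' item to the count $L$); and (IV) one-item bins whose single item is $\le 1/2$. A short contradiction via Lemma \ref{surplus} shows that at most one type-IV bin exists: if two such bins $B_k,B_l$ with $k<l$ coexisted, then $s(B_k)+s_{t_l}>1$ together with $s(B_k)=s_{t_k}$ would give $s_{t_k}+s_{t_l}>1$, contradicting both sizes being $\le 1/2$.

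With the volume bound $\mathrm{OPT}\ge\lceil\sum_k s(B_k)\rceil$ and the large-item bound $\mathrm{OPT}\ge L$, I would perform a case analysis on the number of type-III bins together with $q$. When many small bins are type III, $L$ alone already yields $\mathrm{OPT}\ge\lceil 3m/4-1/2\rceil$; when few are, the volume contributions ($\ge 3/4$ per large bin, $>1/2$ per type-II/III bin, $>1/4$ for the possibly-unique type-IV bin) suffice via the volume bound. The principal obstacle is the ``first small bin'' $B_{q+1}$: Lemma \ref{surplus} only yields $s_{t_{q+1}}>1-s(B_q)$, which can be vacuous when $s(B_q)$ is close to $1$, so $B_{q+1}$ may hold tiny items. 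Treating this bin separately (for instance, by pairing its positive volume against the $1/2$ slack in the target) and squeezing out the additive $-1/2$ constant is where I expect the main book-keeping difficulty to lie.
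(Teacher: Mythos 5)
Your structural observations are correct as far as they go: the pair-sum property $s(B_i)+s(B_j)>1$ does follow from Lemma \ref{surplus}, every item in a bin $B_k$ with $k\ge q+2$ does exceed $\frac14$ (so those bins hold at most two items, each in $(\frac14,\frac12)$), and at most one single-item bin with item $\le\frac12$ can exist. The gap is in the engine you propose for the final bound. A lower bound on $OPT$ built only from total volume plus the count of items larger than $\frac12$ cannot reach $\lceil\frac{3m}{4}-\frac12\rceil$. Concretely, consider an equilibrium in which every bin holds exactly two items of size $\frac13+\varepsilon$: here $q=0$, there are no items exceeding $\frac12$, and the total volume is only about $\frac23 m$, so both of your bounds top out near $\frac23 m$, short of $\frac34 m-\frac12$ by roughly $m/12$. (The statement is still true for this instance --- $OPT=m$ because three such items never fit together --- but your two bounds cannot certify it.) The information you are discarding is precisely a cardinality constraint on the mid-sized items: knowing each two-item bin has volume $>\frac12$ is strictly weaker than knowing its two items cannot be packed three-to-a-bin in \emph{any} packing.

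This is where the paper's proof does something you have no analogue of. It sets $L^*:=s(B_{m-1})$ and classifies items as large ($\ge L^*$), regular (in $(1-L^*,L^*)$), or tiny ($\le 1-L^*$); it shows every bin but the last contains a large item or at least two regular items of total size $\ge L^*$ (Lemma \ref{regularsize}), and then proves the key Claim \ref{onlytwo}: the largest $k_2+1$ regular items pairwise sum to at least $L^*$, so no bin --- in the equilibrium or in the optimum --- can hold two of them together with a third regular item. That claim converts the equilibrium structure into a packing obstruction for $OPT$ (large items must sit alone, and the top regular items can essentially only be paired), which is what yields the coefficient $\frac34$ per bin. In the example above it is exactly this claim, not volume, that forces $OPT\ge m-1$. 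Your plan also correctly flags $B_{q+1}$ as a nuisance, but even granting a clean treatment of that bin, the case analysis you outline would not close because the underlying invariants are too weak; you would need to import something like Claim \ref{onlytwo} (or the paper's large/regular/tiny taxonomy relative to $s(B_{m-1})$) to finish.
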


To prove the theorem, first note that if $s(B_{m-2})\geq \frac34$, then the total size of the $n$ items is at least $\frac34\cdot (m-2)+s(B_{m-1})+s(B_{m})>\frac{3}{4}(m-1)$. Since each bin can accommodate items with a total size at most $1$, the conclusion of the theorem follows. Hence, in the discussion below, we will always assume that $s(B_{m-2})<\frac34$.

Let $L^*:=s(B_{m-1})$. Obviously, $L^*\le s(B_{m-2})< \frac34$. Then we use $L^*$ to divide all items into three types: for any item $i\in N$, if its size $s_i\ge L^*$, we call it a \emph{super} item; if $L^*>s_i> 1-L^*$, then we call it a \emph{regular} item; otherwise $s_i\le 1-L^*$, we call it a \emph{tiny} item.
\begin{lemma}\label{notiny}
There are no tiny items in the last two bins $B_{m-1}$ and $B_m$.
\end{lemma}
\begin{proof}First we can see that the statement holds for $B_m$, because otherwise a tiny item can move into bin $B_{m-1}$, contradicting Lemma \ref{surplus}. Next, we show that $B_{m-1}$ does not contain any tiny item either. Note that if there is only one item in $B_{m-1}$, then this item must be a super item because of $s(B_{m-1})=L^*$. If there are more than one item in $B_{m-1}$,  then $s(B_{m-1})\leq s(B_{m-2})<\frac34$  implies that  $s_{t_{m-1}}>\frac14$ (by Lemma \ref{surplus}), $B_{m-1}$ consists of only two items, 
and $\frac14<s_{b_{m-1}}<\frac12$. Suppose by way of contrary that there exists a tiny item in  $B_{m-1}$.  Then  $t_{m-1}$ must be one such tiny item. By the definition of tiny item, we have $s_{t_{m-1}}\le 1-L^*$ and $s_{t_{m-1}}+s(B_m)\le s_{t_{m-1}}+L^*\le1$. Recall that $\pi$ is an NE packing, which means that $t_{m-1}$ has no incentive to move into $B_m$.  This, together with the fact that $B_m$ contains no tiny item, implies that $s_{b_{m-1}}\ge s(B_m)$.  Recalling $s_{b_{m-1}}<\frac12$, thus $s_{b_{m-1}}+ s(B_m)<1$, which means that the bottom item $b_{m-1}$ can fit into the last bin $B_m$. Since $\pi$ is an NE under the LSB cost-sharing rule with $\Lambda=\frac34$, it must be the case that $s_{t_{m-1}}\ge s(B_m)> 1-L^*$, thereby leading to a contradiction.
Now we can conclude that neither $B_{m-1}$ nor $B_m$ contains any tiny item.
\end{proof}

Recall the classification of items.  For any feasible packing, a bin can not accommodate one super item plus with a regular item. In addition, since a regular item is larger than $\frac14$, a bin can  accommodate at most $3$ of them. Now we claim that under $\pi$, except for the last bin $B_m$, each bin either contains a super item or at least two regular items. This is because, otherwise the tiny items in a violating bin would unilaterally like to move into the bin $B_{m-1}$ which contains no tiny item, contradicting the NE assumption. By a similar  argument, we arrive at the following lemma.
\begin{lemma}\label{regularsize}
Under the NE packing $\pi$, except the last bin $B_m$, the total size of all non-tiny items (i.e., the super items and regular items) in each bin is at least $L^*$.
\end{lemma}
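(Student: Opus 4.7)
The plan is to split into sub-cases on whether $B_k$ contains tiny items, and, in the nontrivial case, to exhibit a strictly cost-reducing deviation of one well-chosen tiny item to $B_{m-1}$.

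Two easy sub-cases come first: for $k=m-1$, the preceding lemma gives $B_{m-1}$ no tiny items, so its non-tiny sum equals $s(B_{m-1})=L^*$; for $k\le m-2$ with $B_k$ containing no tiny items, the decreasing sort $s(B_1)\ge\cdots\ge s(B_m)$ yields non-tiny sum $=s(B_k)\ge s(B_{m-1})=L^*$. What remains is the case $k\le m-2$ in which $B_k$ contains at least one tiny item.

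For this case I would pick $t$ to be the \emph{largest} tiny item of $B_k$ (with ties broken by superiority). A short check on the classification shows that every non-tiny item is strictly larger than every tiny item: tiny has size $\le 1-L^*$ and, being not large, also $<L^*$, whereas regular has size $>1-L^*$ and large has size $\ge L^*$. Consequently the items of $B_k$ superior to $t$ are exactly the non-tiny items of $B_k$, whose total size is the $\sigma$ I want to lower-bound by $L^*$. Since $B_k$ is known to contain at least one non-tiny item (established right before the lemma), $t$ is not the bottom of $B_k$, so its NE cost is $\int_{\sigma}^{\sigma+s_t}f(x)\,dx$. Because $t$ is tiny, $s_t+s(B_{m-1})\le 1$, so $t$ can feasibly move to $B_{m-1}$; by the preceding lemma every item of $B_{m-1}$ is non-tiny, hence superior to $t$, so after migration $t$ is again not the bottom and its hypothetical cost would be $\int_{L^*}^{L^*+s_t}f(x)\,dx$.

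I would then suppose for contradiction $\sigma<L^*$. Using Observation~\ref{sameNE}, I may specialise $f$ to the strictly decreasing example density given explicitly in Section~2, under which $F(x):=\int_0^x f(y)\,dy$ is strictly concave on $[0,\Lambda]$. Since $L^*\le s(B_{m-2})<3/4=\Lambda$, both $\sigma$ and $L^*$ lie in $[0,\Lambda)$, and $\sigma<L^*$ combined with strict concavity forces $F(\sigma+s_t)-F(\sigma)>F(L^*+s_t)-F(L^*)$; hence $t$ strictly reduces its cost by migrating to $B_{m-1}$, contradicting the NE assumption, and the lemma follows.

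The subtle point -- and the reason I would pick the \emph{largest} tiny item rather than the smallest -- is that any smaller tiny item of $B_k$ has additional tiny mass above it, yielding only $s(B_k)-s_t\ge L^*$, an inequality still contaminated by tiny contributions. Taking the largest tiny makes the superior-sum coincide exactly with $\sigma$, so a single NE deviation inequality delivers the desired bound on the non-tiny mass.
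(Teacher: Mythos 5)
Your proof is correct and takes essentially the same route the paper intends: the paper disposes of this lemma with the single phrase ``by a similar argument,'' meaning precisely the deviation you carry out, namely that if the non-tiny mass of some bin $B_k$ ($k<m$) were below $L^*$, its most superior tiny item could feasibly migrate to $B_{m-1}$ (where, by the preceding lemma, every item is non-tiny and hence superior to it) and strictly lower its cost, contradicting the NE property. Your extra care in choosing the \emph{largest} tiny item so that the superior-sum equals exactly the non-tiny mass, and in invoking Observation~\ref{sameNE} to fix a strictly decreasing density, fills in details the paper leaves implicit.
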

 Assume that among the first $(m-1)$ bins of $\pi$, there are $k_1$ bins with only one super item, $k_2$ bins with exactly two regular items, and $k_3$ bins with exactly three regular items, where $k_1+k_2+k_3=m-1$. So in the first $(m-1)$ bins, there are $k_1$ super items, and $2k_2+3k_3$ regular items. Recall that there is at least one regular or super item in $B_m$. In the following, we will finally show that even to pack all the non-tiny items, every valid packing (including the optimal one)  uses at least $k_1+\lceil\frac34(k_2+k_3)\rceil$ bins. To simplify the analysis, we will just assume that $B_m$ contains only one regular item. Thus we need to pack $k_1$ super items and $2k_2+3k_3+1$ regular items. Observe that under any valid packing, a super item can not be packed with any regular item into one bin. So each super item alone will occupy one bin. So we only need to show that the $2k_2+3k_3+1$ regular items will take at least $\lceil\frac34(k_2+k_3)\rceil$ bins to pack them.
We rename all the regular items according to their size ranks such that $i_1\succ i_2\succ \cdots\succ i_{p}$, where $p=2k_2+3k_3+1$. Let $\mathcal N_r:=\{i_1,i_2,\ldots, i_{k_2+1}\}$ be the biggest $k_2+1$ regular items.
\begin{claim}\label{onlytwo}
No bin can accommodate two items in $\mathcal N_r$ with another regular item.
\end{claim}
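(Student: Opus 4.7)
The plan is to prove Claim~\ref{onlytwo} by contradiction. Suppose there exist distinct items $a, b \in \mathcal N_r$ and a regular item $c \notin \{a, b\}$ with $s_a + s_b + s_c \le 1$; by symmetry we may take $a \succ b$.

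First I would dispose of the trivial regime $L^* \le 2/3$: then every regular item has size strictly larger than $1 - L^* \ge 1/3$, so any three regular items sum to strictly more than $1$, directly contradicting the hypothesis. Hence from now on assume $L^* > 2/3$; in this window each of $s_a, s_b, s_c$ lies in $(1-L^*,\,2L^*-1) \subset (1/4,\,1/2)$, so any two of them fit with quite a bit of room to spare in a bin.

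Next I would derive a contradiction from the NE structure of $\pi$ by exhibiting a strictly profitable unilateral migration for some item. The guiding principle is that $a, b \in \mathcal N_r$ are among the top $k_2+1$ regular items, hence very ``superior'' in the indexing, whereas $c$ is regular but need not lie in $\mathcal N_r$. Recalling that every 2-regular bin of $\pi$ has total size at least $L^*$ by Lemma~\ref{regularsize}, I would combine Lemma~\ref{surplus} with the feasibility $s_a + s_b + s_c \le 1$ to locate a target bin whose load is small enough that either $c$, or an item superior to $c$ in its current bin, can migrate into it and strictly reduce its cost share; the strict decrease follows because $f$ is non-increasing on $[0, \Lambda]$ and the cumulative superior-mass in the target strictly dominates that in the source. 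This contradicts the NE assumption on $\pi$.

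The main obstacle is the case analysis on how $\{a, b, c\}$ are distributed across the bins of $\pi$. I would handle separately the subcases where the three items (i) all share a single bin, (ii) occupy exactly two bins, and (iii) lie in three distinct bins; within each, the bins may be 2-regular, 3-regular, or $B_m$, but never contain a large item (since any large item plus any regular item exceeds unit capacity). The most delicate situation is (iii) with $a, b$ in different 2-regular bins and $c$ in a 3-regular bin: there I would have to carefully compare the superior-mass across all three bins and invoke both Lemma~\ref{surplus} and the defining top-$(k_2+1)$ property of $\mathcal N_r$ to secure the strict cost-decrease that produces the contradiction.
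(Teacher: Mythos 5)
Your proposal is a plan rather than a proof: the decisive step is never carried out. The claim is a statement about an arbitrary valid bin, so the clean way to prove it is to establish a pure size inequality, namely that any two items of $\mathcal N_r$ together have size at least $L^*$; once you have that, a third regular item (size $>1-L^*$) pushes the total past $1$ and you are done with no case analysis at all. The paper gets this inequality by looking at the $k_2$ bins of $\pi$ that contain exactly two regular items: their bottom items are $k_2$ distinct regular items, so the smallest such bottom item is no superior than $i_{k_2}$ and its companion is no superior than $i_{k_2+1}$; by Lemma~\ref{regularsize} that bin's two regular items already total at least $L^*$, hence so do $i_{k_2}$ and $i_{k_2+1}$, the two smallest members of $\mathcal N_r$. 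Your sketch never identifies this (or any other) quantitative lower bound on pairs from $\mathcal N_r$. Instead you propose to locate, for each configuration of $\{a,b,c\}$ across the bins of $\pi$, a profitable unilateral deviation contradicting the NE property --- but you explicitly defer exactly the ``delicate'' configurations where the contradiction would have to be produced, and you give no mechanism for why the required target bin with strictly smaller superior-mass must exist. As written, the argument could not be checked or completed without supplying the missing idea.

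Two smaller points. First, your asserted size window for regular items in the regime $L^*>2/3$, namely $(1-L^*,\,2L^*-1)\subset(1/4,1/2)$, is not the definition and is not justified: regular items lie in $(1-L^*,L^*)$ with $L^*<3/4$, so a regular item may well exceed $1/2$ (e.g.\ $L^*=0.7$ and an item of size $0.65$), and the ensuing remark that any two regular items fit ``with room to spare'' is false in general. Second, your disposal of the regime $L^*\le 2/3$ is correct but unnecessary once the pairwise bound $s_{i_{k_2}}+s_{i_{k_2+1}}\ge L^*$ is in hand, since that bound settles all values of $L^*$ uniformly.
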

\begin{proof}
Recall that under $\pi$, we have $k_2$ bins containing exactly two regular items (and perhaps some tiny items). Let $b_{j'}$ be the minimum bottom item of these $k_2$ bins, where its bin is denoted as $B_{j'}$. Then apparently, the size of $b_{j'}$ is no more than the size of regular item $i_{k_2}$. Therefore the total size of the two regular items in $B_{j'}$ is no more than the total size of $s(i_{k_2})+s(i_{k_2+1})$. Then by Lemma \ref{regularsize}, we know that $s(i_{k_2})+s(i_{k_2+1})\ge L^*$. It implies that any two regular items from $\mathcal N_r$ sum up to at least $L^*$. Combining with the definition of regular items, we can conclude that the total size of two regular items of $\mathcal N_r$ and  any one more regular item will exceed the bin size $1$.
\end{proof}
Finally we can prove Theorem \ref{th:poa} by showing that every valid packing (including optimal ones) uses at least $\lceil\frac{3m-3}{4}\rceil$ bins, even for packing all the non-tiny items.

\bigskip

\noindent\textbf{Proof of Theorem \ref{th:poa}}
\begin{proof}
First we show that the $p=2k_2+3k_3+1$ regular items will take at least $\lceil\frac34(k_2+k_3)\rceil$ bins to pack them. In any packing, if a bin contains 3 regular items, then among which at most one item of $\mathcal N_r$ is included. Note that there are  $p-(k_2+1)=k_2+3k_3$ regular items outside $\mathcal N_r$. Based on whether all items in $\mathcal N_r$ are able to be packed into bins with 3 regular items (one item in $\mathcal N_r$ and two regular items outside $\mathcal N_r$), the proof proceeds with the following two cases.
\begin{description}
  \item[(i)] $\frac{k_2+3k_3}{2}\geq |\mathcal N_r|=k_2+1$.  This condition is equivalent to $k_2\le 3k_3-2<3k_3$. Then under every valid packing, the number of bins that  all the $p=2k_2+3k_3+1$ regular items take is at least
  $\lceil\frac{p}{3}\rceil$. We are done by noting that \[\frac{p/3}{(k_2+k_3)}=\frac23+\frac{k_3+1}{3(k_2+k_3)}> \frac23+\frac{1}{3(\frac{k_2}{k_3}+1)}>\frac23+\frac{1}{12}=\frac34.\]

  \item[(ii)] $\frac{k_2+3k_3}{2}< |\mathcal N_r|$. In this case, we can pack at most $\frac{k_2+3k_3}{2}$ items of $\mathcal N_r$ into bins with 3 regular items. The most efficient way using the least number of bins to pack the regular items is to share one item from $\mathcal N_r$ with other two regular items out of $\mathcal N_r$, and then pack the remaining items in $\mathcal N_r$ in pairs. Suppose there are $q$ items from $\mathcal N_r$  packed into bins of 3 regular items. It is easy to see that $q\leq \frac{k_2+3k_3}{2}$. Except these $q$ regular items, all the other $k_2+1-q$ items in $\mathcal N_r$ will occupy at least $\frac{k_2+1-q}{2}$ bins. Then the total number of bins needed to pack all the regular items is at least
      \begin{eqnarray*}
      & & \frac{k_2+1-q}{2}+\frac{q+2k_2-(k_2+1)+3k_3+1}{3}\\&=&\frac{5k_2+6k_3+3}{6}-\frac q6\\&\geq& \frac{5k_2+6k_3+3}{6}-\frac{k_2+3k_3}{12}\\&=& \frac34(k_2+k_3)+\frac12,
      \end{eqnarray*} where the above inequality follows from $q\leq \frac{k_2+3k_3}{2}$.
\end{description}

 Recall that $k_1+k_2+k_3=m-1$. We complete the proof by showing that 
 \[k_1+\left\lceil\frac34(k_2+k_3)\right\rceil =\left\lceil \frac34(k_1+k_2+k_3)+\frac14 k_1 \right\rceil \geq \left\lceil\frac34(m-1)\right\rceil.\]
\end{proof}

\section{Construct Efficient Equilibria by  $BFD$}\label{se:BFD}
In Section \ref{34}, we have shown that the LSB cost-sharing rule with $\Lambda=\frac34$ is already the best possible with $PoS=1$ and $PoA=\frac43$ in the game. 
So one can not expect to reduce the inefficiency of equilibria in this sense. However, it would be favorable if a social planner can construct an NE packing in polynomial time, whose asymptotic approximation ratio is smaller than $\frac43$. In this case, the system can directly reach a stable and effective state under which no selfish item would like to move. In addition, it is desirable to compute an NE efficiently with an asymptotic approximation ratio comparable to the classic optimization algorithms for bin packing. To this end, we formally introduce a well-known bin packing algorithm $BFD$ (Best Fit Decreasing), which works as follows.

\paragraph{\bf Algorithm $BFD$.} First reorder the items in non-increasing sizes such that item $1\succ$ item $2\succ\cdots\succ$ item $n$. Then one by one in this order, the algorithm places the current item in a bin that it fits the tightest; if there is no existing bin with enough space for the item, then open a new bin and put it in.

In this section, we show that by setting $\Lambda=\frac23$ in the bin packing games, even though its $PoA$ is at least $1.5$ as illustrated in the following example, the algorithm $BFD$ always leads to  NE packings whose 
$PoA$ is $\frac{11}{9}$, as the asymptotic approximation ratio of $BFD$ is $11/9$~\cite{J74}.
\begin{example}
The item set contains $2k$ items with size $\frac23$,  $2k$ items with size $\frac13$, and $3k$ items with size $\varepsilon=\frac{1}{3k}$, where $k$ is a positive integer. In this instance, there exists a packing $\pi^*$ with $3k$ bins used as shown in Figure \ref{fig:NE} and it is an NE. To see this, first note that the items with sizes $\frac23$ and $\frac13$ can not move; besides, the tiny items with size $\varepsilon$ have no incentive to move since their costs are already zero.  Meanwhile, the optimal packing only uses $m^*=2k+1$ bins as illustrated in Figure \ref{fig:OPT}. Apparently, $$\lim_{k\rightarrow\infty}\frac{m(\pi^*)}{m^*}=1.5.$$
\end{example}
\begin{figure}[ht!]
\begin{minipage}[b]{0.4\textwidth}
\includegraphics[width=\textwidth]{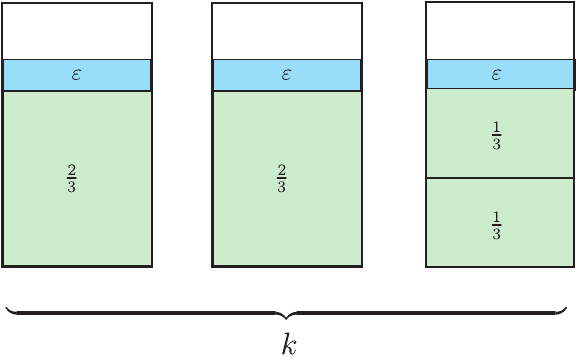}\\
\caption{$m(\pi^*)=3k$}
\label{fig:NE}
\end{minipage}
\hspace{20mm}
\begin{minipage}[b]{0.418\textwidth}
\includegraphics[width=\textwidth]{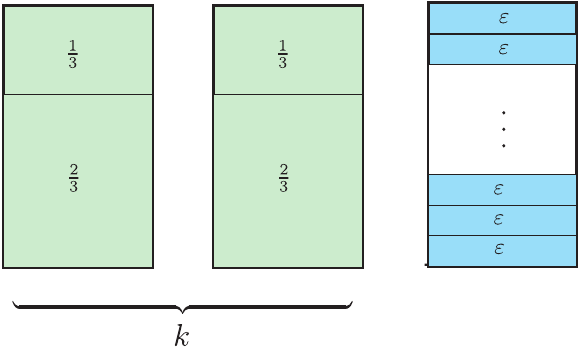}\\
\caption{$m^*=2k+1$}
\label{fig:OPT}
\end{minipage}
\end{figure}

 Denote $\pi$ as the derived packing by the algorithm $BFD$. In the following, we show that $\pi$ is not just an NE (which in fact is easy to prove), but also a strong Nash equilibrium.
\begin{definition}
A packing is called a \emph{strong Nash equilibrium} (\emph{strong NE}) if no non-empty subset of items can change their bins simultaneously such that every one of this subset benefits from this changing. 
\end{definition}
\begin{theorem}
The packing $\pi$ produced by $BFD$ is a strong NE under the LSB cost-sharing rule with $\Lambda=\frac23$.
\end{theorem}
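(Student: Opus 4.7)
The plan is to prove that the $BFD$ packing $\pi$ is a strong NE by contradiction, exploiting the tightest-fit discipline of $BFD$ together with a smallest-member analysis on the alleged deviating coalition.

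First I record a structural fingerprint of $BFD$. Since $BFD$ processes items in non-increasing size order, every item subsequently placed into the bin $B$ of some item $i$ is strictly inferior to $i$; hence the total size of items in $B$ superior to $i$ under $\pi$ equals the load of $B$ immediately before $i$'s insertion. Combined with the tightest-fit rule this yields the key inequality: for any other bin $B'$ that existed at the time of $i$'s placement and could also have accommodated $i$ then, the load of $B'$ at that moment, which equals the superior-load of $i$ in $B'$ under $\pi$, is at most the analogous quantity for $B$.

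As a warm-up I verify that $\pi$ is an NE. For a single deviation of item $i$ from $B$ to $B'$, two cases arise: either $B'$ existed at $i$'s placement, in which case the fingerprint together with the strict monotonicity of $f$ on $[0,2/3)$ forces $i$'s non-bottom cost to weakly increase; or $B'$ was opened later by some inferior item, so $i$ becomes the bottom in $B'$ with cost at least $\int_0^{s_i}f(x)\,dx$, already dominating $i$'s old non-bottom cost. When $i$ is originally a bottom item, an auxiliary bound derived from the fact that $B'$'s own bottom item failed to fit in $B$ at its placement will yield $s(B)\geq s(B')+s_i$ and rule out a cheaper empty-space term.

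For the strong-NE step, suppose a nonempty coalition $S$ deviates to a packing $\pi'$ in which every member strictly benefits. Let $i^{*}\in S$ be the smallest member, with bin $B$ in $\pi$ and bin $B^{*}$ in $\pi'$; every other member of $S$ is superior to $i^{*}$. Partition the items superior to $i^{*}$ originally in $B^{*}$ into $P$ (non-coalition items, who stay) and $R=S\cap B^{*}$ (who leave), and let $Q\subseteq S\setminus\{i^{*}\}$ denote the members of $S$ migrating into $B^{*}$. The new superior-load of $i^{*}$ in $B^{*}$ is $s(P)+s(Q)$, while its old superior-load in $B$ is some $\sigma$. The BFD fingerprint at $i^{*}$'s placement gives $\sigma\geq s(P)+s(R)$, so strict improvement forces $s(Q)>s(R)$. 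Repeating the same fingerprint argument at the placement of each $q\in Q$ yields the sharper inequality $s(Q_{\succ q})>s(R_{\succ q})$. I will then couple these constraints with analogous ones obtained at the destination bins of the members of $R$ and with the capacity bounds on $\pi'$ to derive the contradiction.

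The main obstacle is precisely this final combining step: the individual strict-improvement inequalities for $Q$ and for $R$ are each consistent in isolation, and only when coupled with the validity of $\pi'$ at $B^{*}$ and at the destinations of $R$ do they become incompatible. A secondary technical difficulty is that the fingerprint requires the item in question to have actually fit into $B^{*}$ at its own placement time; when this fails I will fall back on a weaker substitute inequality derived from the validity of $\pi$ on $B^{*}$ and the slack freed by the departure of $R$. A case split on whether each coalition member is bottom or non-bottom in its relevant bin, together with repeated use of Observation~\ref{samecost} to isolate single-item effects along the chain of coalition movements, should close the argument.
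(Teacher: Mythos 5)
Your proposal is a plan rather than a proof: the step you yourself flag as ``the main obstacle'' --- combining the inequalities from $Q$, from $R$, and from the destination bins of $R$ into a contradiction --- is exactly where all the difficulty lives, and it is not carried out. Worse, anchoring on the \emph{smallest} coalition member $i^{*}$ is the wrong end of the coalition: $i^{*}$'s cost in $B^{*}$ depends on the moves of every other member of $S$ (your sets $R$ and $Q$), so nothing about its situation is pinned down by $BFD$ alone. Note that if your fingerprint inequality applied cleanly to the most superior member $q_1$ of $Q$, you would get $0=s(Q_{\succ q_1})>s(R_{\succ q_1})\ge 0$ and be done instantly --- which tells you that the entire content of the theorem sits in the cases where the fingerprint fails: $B^{*}$ not yet open at placement time, the item not fitting into $B^{*}$ because $R$ had not yet vacated it, loads at or above $\Lambda$, and bottom items paying the empty-space surcharge. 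For instance, if $P_{\succ q_1}$ is a single item of size $0.5$, $R_{\succ q_1}$ a single item of size $0.45$, and $s_{q_1}=0.4$, then $q_1$ never fit into $B^{*}$ under $BFD$, yet after $R$ departs it can move in and strictly gain; your ``weaker substitute inequality'' is unspecified, and the chase into the destination bins of $R$ has no termination argument.

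The paper's proof goes the opposite way: it takes the \emph{most superior} member $j$ of $S$, so that every item superior to $j$ is frozen and $BFD$'s tightest-fit choice immediately kills the non-bottom case. When $j$ is a bottom item, the paper does not try to contradict $j$'s own improvement globally; it exhibits a \emph{specific second coalition member} (the second largest item $i'$ of $j$'s new bin, or the bottom item $j'$ of the bin $B_{\ell}$ that $j$ invades) that provably cannot benefit, using a case split on whether $s_{i'}>\tfrac13$. That $\tfrac13$ threshold, and the fact that two items larger than $\tfrac13$ push a bin's load past $\Lambda=\tfrac23$ so its bottom pays no empty-space penalty, is where the hypothesis $\Lambda=\tfrac23$ is actually used; your sketch never engages with this value, which is another sign the argument as outlined would not close.
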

\begin{proof}
From the process of the $BFD$ algorithm we know, that when every item $i\in N$ is considered, it is placed into a feasible bin with the highest height level. Given an arbitrary non-empty subset of items $S\subseteq N$. We show that no matter how the items in $S$ change their bins simultaneously, at least one item in $S$ can not gain by doing this. Let $j\in S$ be the biggest item of $S$.

First consider the case that $j$ is not a bottom item of $\pi$.  Since all other items which are bigger than $j$ don't change their positions and $j$ is a non-bottom item, then $j$ has no incentive to move since it has already been placed in the best bin it fits and has the minimum possible cost-share.
Now we turn to the case of $j$ being a bottom item. Suppose there are $m$ bins under packing $\pi$, and name them as $B_1,B_2,\ldots,B_m$ as their opening order under $BFD$ algorithm, i.e., $B_1$ is the first opened bin, and $B_m$ is the last one. For $k=1,2,\ldots,m$, let $b_k$ be the bottom item of bin $B_k$. Then by the $BFD$ algorithm, we have $b_1\succ b_2\succ\cdots\succ b_m$. Besides, we
should note that when each bottom item $b_k$ ($k\le m$) is considered by $BFD$, it can not be put into any existing bin $B_h$ with $h<k$. So we can see that after all items in $S$ change their bins,  item $j$ is still a bottom item in the new resulted packing (denoted by $\pi'$). Now suppose under $\pi$, $j$ is the bottom of bin $B_k$, i.e., $j=b_k$; and under the new packing $\pi'$, $j\in B'$. If $j$ can decrease its cost by the simultaneous migration of $S$ (otherwise we are done), then combining with the cost-sharing rule with $\Lambda=\frac23$, we know it must be the case that  $s(B_k)<\frac23$ (thus $B_k$ contains only one item with size at least $\frac13$) and $s(B')>s(B_k)$. Denote by $i'$ the second biggest item in bin $B'$ of $\pi'$. We consider two possible scenarios.

\begin{enumerate}
  \item[(1)] $s_{i'}>\frac13$. Under the original packing $\pi$, $i'$ must belong to some bin $B_h$ with $h<k$, because otherwise we should have $s(B_k)\ge\frac23$ since $j$ and $i'$ can be packed into one same bin by $BFD$. Recalling that bottom item $j$ can not move into bin $B_h$, we have $i'\in S$. Note that $b_h\succ j(=b_k)$. Then apparently, the cost-share of item $i'$ in $B'$ is no less than the previous cost-share in $B_h$. Thereby it has no incentive to move.
  \item[(2)] $s_{i'}\le \frac13$. Then $s_{i'}+s(B_k)\le 1$. This means, that when $i'$ was considered by the $BFD$ algorithm, it was placed in bin $B_k$ or a better bin, with a cost no more than $\int_{s_j}^{s_j+s_{i'}}f(x)dx$. So if $i'\in S$, then $i'$ can not reduce its cost by moving into the bin $B'$ of $\pi'$, done. If $i'\notin S$, then this is to tell that $j$ moves into the bin containing $i'$ and $i'$ remains in the bin. Recall that $j$ is the biggest item of $S$ and it is still a bottom item after the deviation. We know that under $\pi$, item $i'\in B_{\ell}$ with $\ell>k$. Combining with the fact that $s(B_k)<\frac23$ and $j=b_k\succ b_{\ell}$, we deduce that there are two items in $B_{\ell}$ with size larger  than $1-s_j\geq 1-s(B_k)>\frac13$, since otherwise one of the two biggest items in $B_{\ell}$ should be packed into $B_k$ by the $BFD$. Therefore under $\pi$, bin $B_{\ell}$'s size $s(B_{\ell})>\frac23$ and the cost-share of $b_{\ell}$ is exactly $\int_0^{s_{b_{\ell}}}f(x)dx$. Besides, since $j$ will move into the bin $B_{\ell}$ in the deviation of $S$ and $i'$ will be the second biggest item in the new bin, thus $b_{\ell}\in S$. Let $j'\in S$ be a bottom item in $S$ with cost-share being exactly  $\int_0^{s_{j'}}f(x)dx$ under $\pi$ and the biggest such one. Then it can be easily deduced that $s_{j'}\ge s_{b_{\ell}}>1-s_j$, and  $j'$ is still a bottom item under the new packing $\pi'$, which implies that its new cost-share under $\pi'$ is at least $\int_0^{s_{j'}}f(x)dx$. So $j'$ can not decrease its cost-share by the changes.
\end{enumerate}
Overall, we can conclude that under  packing $\pi$, no group of items can  change the bins simultaneously to make every one of them better off. It implies that $\pi$ is a strong NE.
\end{proof}

\section{Extension}\label{se:extend}
In this section, we study a variant of the classical selfish scheduling game \cite{CKN09,ILMS09,C13,K13,AFJMS15}, and show that the key idea of our cost-sharing rule can be adapted to design a best coordination mechanism for this considered scheduling game.  As far as we know, previously there was no such close connection between  cost-sharing-rule design in a resource selection game  and  scheduling-policy design in a scheduling game.

In this variant scheduling game, given sufficiently many identical machines, there are $n$ jobs, each with a processing time $p_j$ for job $j\in N=\{1,2,\ldots, n\}$. However, the machines have a fixed working time window $[d,d+T]$, where $T\ge\max_{j\in N}p_j$. To interpret this, one can imagine that the machines are controlled by crew members who have a fixed working period in a day. Each job is owned by an independent user whose goal is to minimize its own completion time by selecting a machine before the beginning time $d$. The scheduling policies on the machines thus become crucial.

As all jobs have selected their machines, we get an assignment, which is \emph{valid} if the total processing time of jobs on every machine does not exceed $T$. It is important to emphasize that a scheduling policy only works if the assignment is valid, otherwise the overloaded machines will crash. The social objective of the game is to minimize the number of machines used. We remark here that the scheduling models with the objective of minimizing the number of used machines are also an important research topic, see literature \cite{C04machine,chen2018m} for reference.

Apparently, if the social planner can control all the jobs, then from an optimization point of view, it is exactly the problem of bin packing optimization. However, from the game point of view, this selfish scheduling model (referred as \emph{selfish capacitated scheduling} in this section) is quite different from the aforementioned selfish bin packing. In this game, each machine does not distribute a fixed cost among the jobs it processes but decides the starting (or completion) times for them. And the jobs only care about their own completion times. The
existing cost-sharing rules in literature for selfish bin packing can not be applied to designing the scheduling policies.

Let us first look at what challenges we may face to guarantee efficient Nash equilibria. Note that one job may prefer to occupy a machine alone, rather than stay with other jobs. It results in a large efficiency loss. Namely, as an example, if every machine always processes jobs from the very beginning (time $d$), then the unique Nash equilibrium would be one machine for one job, which is the worst. In the following, we shall focus on coordination mechanisms under which no job would like to unilaterally move to an empty machine. Such coordination mechanisms are called \emph{reasonable}. Then the general lower bound $4/3$ on $POA$ in \cite{Dosa2019} still holds for reasonable mechanisms.
\begin{proposition}[\cite{Dosa2019}]
For selfish capacitated scheduling, any reasonable coordination mechanism has a $PoA$ at least $4/3$.
\end{proposition}

In the following, we will propose a best possible scheduling policy for this game with $PoS=1$ and $PoA \le 4/3$. The policy will be denoted by \textsf{$^{0}$ShortestFirst} and is adopted by every machine. The whole coordination mechanism first re-indexes the jobs in the non-decreasing order of their processing times (breaking ties arbitrarily) and fixes this order. Then for each machine $\mathfrak m$, suppose the valid job set it receives is $J_{\mathfrak m}$. The machine stays idle until time $s_{\mathfrak m}=d+(T-\sum_{j\in J_{\mathfrak m}}p_j)$. Then from the time $s_{\mathfrak m}$ on, the machine processes the jobs continuously in the non-decreasing order of jobs' processing times (breaking the ties as predefined above). In fact, we can view the idle times of a machine as being processing ``null jobs''. This scheduling policy can also be denoted as \textsf{LongestLast}.

It is obvious that the coordination mechanism with \textsf{$^{0}$ShortestFirst} policy is reasonable since no job favors to be completed by the deadline $d+T$. Besides, in this coordination mechanism, every job is willing to select a machine with a larger total processing time of ``longer" jobs. In this case, the job can be processed earlier. This is quite similar to the aforementioned selfish bin packing game with $\Lambda=1$, where every job prefers to be packed into a bin with a higher height level. However, in the latter game, since each bin's bottom item has to pay the remaining uncovered cost, the bottom item also prefers to stay in a bin with more content. 
Now we will try to modify the latter game to make the two games equivalent.

We consider relaxing the budget-balance constraint of $\sum_{i\in B}c(i)= 1$ to the constraint  $\sum_{i\in B}c(i)\leq 1$ in the selfish bin packing game. Then it is not hard to find that by viewing jobs as items, the normalized processing times of jobs as items' sizes, and machines as bins, then the games of selfish capacitated scheduling under \textsf{$^{0}$ShortestFirst} policy, are essentially equivalent to the games of selfish bin packing under the following cost-sharing rule, with respect to their sets of Nash equilibria and NE's inefficiency:

 \begin{itemize}
\item[1.]  Let $f(x)=2-2x$ with $x\in[0,1]$. 

\item[2.]  Given a packing $\pi$, for each bin $B$, suppose w.l.o.g. that $B=\{i_1,i_2,\ldots,i_l\}$ and $i_1\succ i_2\succ\cdots\succ i_l$. Let $S_B^j:=\sum_{h=1}^js_{i_h}$ for $j\leq l$. Then the cost-share of item $i_j$ for $j=1,2,\ldots,l$ is
\[ c_{i_j}(\pi)=\begin{cases} \int_0^{S_B^1}f(x)dx,  \  \text{  for  } j=1;\\ \int_{S_B^{j-1}}^{S_B^{j}}f(x)dx,  \ \text{  for  } j>1.
\end{cases}
\]
\end{itemize}
This bin packing game is denoted by $\mathbb{\bar G}_{bin}$. To avoid introducing more definitions and simplify the analysis, in the following we will focus on studying the bin packing game  $\mathbb{\bar G}_{bin}$ and follow the notations defined previously. The results derived below can be adapted to the selfish capacitated scheduling under the \textsf{$^{0}$ShortestFirst} policy.

\begin{observation}\label{newsamecost}
The above cost-sharing rule of game  $\mathbb{\bar G}_{bin}$ is local, monotone, fair, and polynomial-time computable, but not budget-balanced.

\end{observation}
Now we show that $\mathbb{\bar G}_{bin}$  always has an NE being an optimal packing, and its $PoA$ is the best possible.
\begin{theorem}\label{newpos}
For game $\mathbb{\bar G}_{bin}$, it's $PoS=1$.
\end{theorem}
\begin{proof}
 Assume that  the items in $N$ are $a_1\succ a_2\succ \cdots\succ a_{n}$. Let $\mathcal {P}^*$ be the set of the optimal packings which use the minimum number of bins. For two packings $\pi,\pi'\in \mathcal P^*$, we say that $\pi$ is lexicographically  higher than $\pi'$ if the height levels satisfy that, for some $\ell\le n$ we have, $l_{a_h}(\pi)=l_{a_h}(\pi')$ for every $h<\ell$ and $l_{a_{\ell}}(\pi)>l_{a_{\ell}}(\pi')$. Since $\mathcal P^*\neq\emptyset$, then there must exist a packing in $\mathcal P^*$ that is lexicographically highest, which is denoted by $\pi^*$. Note that under $\pi^*$,  no item in $N$ can increase its height level (thus reduce its cost) by unilaterally moving into another feasible bin. This is because otherwise, it contradicts the definition of $\pi^*$ as the deviation will not change the height levels of all the bigger items. Therefore $\pi^*$ is exactly an NE.
\end{proof}

\begin{theorem}\label{th:newpoa}
For game $\mathbb{\bar G}_{bin}$, it's $PoA= 4/3$.
\end{theorem}
\paragraph{Proof sketch of Theorem \ref{th:newpoa}.} The key idea of proving Theorem \ref{th:newpoa} is same as the analysis in Section \ref{poa}. However, since the cost-sharing rule of  game $\mathbb{\bar G}_{bin}$ is different from the rule in selfish bin packing with $\Lambda=\frac34$, its proof   still has some critical differences, which will be listed below.

Given an NE packing $\pi$ of game $\mathbb{\bar G}_{bin}$. Suppose without loss of generality that $\pi$ is composed of $m$ nonempty bins $B_1, B_2,\ldots, B_m$ with $s(B_1)\ge s(B_2)\ge \cdots\ge s(B_m)$. Again, for each bin $B_k$, we use $b_k$ to denote its bottom item, and $t_k$ to denote its minimum item. First, as a result of $f(x)$ being a decreasing function on the whole interval $[0,1]$, we have the following stronger conclusion (compared with Lemma \ref{surplus}) about the infeasibility of the smallest item of one bin moving into another bin.
\begin{lemma}\label{newsurplus}
For every two bins $B_j$ and $B_k$ with  $j<k$, we have $s(B_j)+s_{t_k}>1$.
\end{lemma}
Consequently,  the following corollary is easy to get.
\begin{corollary}
Except for the last bin, every other bin  $B_j$'s size $s(B_j)>1/2$, where $j\le m-1$.
\end{corollary}
Next, we  still try to prove that the optimal packing uses at least $\lfloor \frac{3m}{4}\rfloor$ bins, which implies  $PoA\leq 4/3$. However, since we can not guarantee the correctness of  Lemma \ref{notiny} in game $\mathbb{\bar G}_{bin}$, we  have a slightly different assumption base and then further arguments. Note that if $s(B_{m-3})\geq \frac34$, then the total size of the $n$ items is at least $\frac34\cdot (m-3)+s(B_{m-2})+s(B_{m-1})+s(B_{m})>\frac34\cdot (m-3)+s(B_{m-2})+1>\frac{3m}{4}-\frac{3}{4}$, which still implies  the optimal packing using at least $\lfloor \frac{3m}{4}\rfloor$ bins. So in the following, we will assume that $s(B_{m-3})<\frac34$.

Different from the definition in Section \ref{poa}, we denote $L^*:=s(B_{m-2})\le s(B_{m-3})< \frac34$. For any item $i\in N$, if its size $s_i\ge L^*$, we call it a \emph{super} item; if $L^*>s_i> 1-L^*$,  we call it a \emph{regular} item; otherwise $s_i\le 1-L^*$, we call it a \emph{tiny} item.
\begin{lemma}
There are no tiny items in the last three bins $B_{m-2}, B_{m-1}$ and $B_m$.
\end{lemma}
\begin{proof}
The statement's correctness for the last two bins $B_{m-1}, B_m$ is apparent by Lemma \ref{newsurplus}. For bin $B_{m-2}$, as a result of $s(B_{m-3})< \frac34$ and Lemma \ref{newsurplus}, we have $s_{t_{m-2}}>1/4$. Thus there are at most two items in $B_{m-2}$. If $s_{t_{m-2}}$ is not a tiny item, then we are done. Otherwise,  $s_{t_{m-2}}<s_{t_{m-1}}$ and $s_{t_{m-2}}+s(B_{m-1})\le s_{t_{m-2}}+L^*\leq 1$. Recall that $\pi$ is an NE. So item $t_{m-2}$ has no incentive to move into $B_{m-1}$. This means that the bottom item of $B_{m-2}$ satisfy $s_{b_{m-2}}\geq s(B_{m-1})>1/2$. Then $S(B_{m-2})=s_{b_{m-2}}+s_{t_{m-2}}>1/2+1/4$, contradicting the fact $s(B_{m-2})<3/4$.
\end{proof}
In the following, by focusing on all the non-tiny items in the first $(m-2)$ bins, and  using a similar analysis as the counterpart's in Section \ref{poa}, we can show that packing  all the non-tiny items in the first $(m-2)$ bins and the items in $B_{m-1}$ and $B_m$ needs at least $\lceil \frac{3m}4-\frac23\rceil\geq \lfloor\frac{3m}{4}\rfloor$ bins. \qedsymbol

\bigskip
We have  shown in Section \ref{se:analysis} that, for the game of selfish bin packing, by setting constant $\Lambda=\frac34$ in the LSB cost-sharing rule, we can guarantee its $PoS=1$ and $PoA=\frac43$. However, if we want to compute  an NE in polynomial time with $PoA$ smaller than $\frac43$, we have to reset $\Lambda=\frac23$ in the game, but with the $PoA\ge \frac32$, as illustrated in Section \ref{se:BFD}. But for game $\mathbb{\bar G}_{bin}$, we can achieve all the desired results simultaneously.

\begin{theorem}
For game $\mathbb{\bar G}_{bin}$, the packing $\pi$ produced by the algorithm $BFD$ is a strong NE.
\end{theorem}
\begin{proof}
Given an arbitrary subset of items $S\subseteq N$. Let $i\in S$ be the biggest item of $S$. We show that no matter how the items in $S$ change their bins simultaneously, item $i$ can not reduce its cost by these deviations. From the process of $BFD$ algorithm we know, every item has been placed into the best possible bin if all other bigger items don't change their positions. By recalling the cost-sharing rule of game  $\mathbb{\bar G}_{bin}$, no matter whether $i$ is a bottom item or not, its cost share can not be reduced after the deviations of $S$.
\end{proof}

\begin{remark}
Even though the game $\mathbb{\bar G}_{bin}$ is introduced as a tool to study the game of selfish capacitated scheduling in this paper, it has an independent value as a bin packing game. Note that in classical selfish bin packing, the social planner's aim is to minimize the number of bins used, not to earn revenue. So the budget-balance constraint needs not to be strict. Instead, one can  replace the constraint $\sum_{i\in B}c(i)= 1$ with $\sum_{i\in B}c(i)\leq 1$ in the game. If $\sum_{i\in B}c(i)< 1$ for a bin $B$ in some instance, the social planner can pay the remaining cost of $1-\sum_{i\in B}c(i)$.
\end{remark}

\begin{theorem}
For selfish capacitated scheduling game with \textsf{$^{0}$ShortestFirst} policy, its $PoS=1$, $PoA=\frac43$, and the algorithm $BFD$ leads to a strong NE.
\end{theorem}

\section{Conclusion}\label{se:con}
This paper proposes a best cost-sharing rule for selfish bin packing, assuming that no items in a bin are willing to afford the whole cost by moving to an empty bin. To break the lower bound of $4/3$, one has to redefine the game,  allowing negative payments or some other selfish behaviors. One open question this work still leaves is the NE existence (and also the exact $PoS$) for bin packing games with $\Gamma>\frac34$. 
We believe that the framework of our cost-sharing rule may shed light on many other variants of selfish bin packing and help design good mechanisms/protocols for more general resource selection games.


\end{document}